\pgfplotsset{compat=newest}
\definecolor{rankcolor}{gray}{0.6}
\tikzset{%
	>={Latex[width=2mm,length=2mm]},
	base/.style = {rectangle, rounded corners, draw=black,
		minimum width=2cm, minimum height=1cm,
		text centered, font=\sffamily, drop shadow},
	patientCondition/.style = {base, fill=white},
	agreementChannel/.style = {base, fill=white},
	rater/.style = {base, fill=white},
	patientConditionGA/.style = {base, fill=blue!30},
	agreementChannelGA/.style = {base, fill=green!30},
	raterGA/.style = {base, fill=orange!15},	
	randomVariable/.style = {draw=none, fill=none, font=\sffamily},
}
\newcommand{\mf}{\mathfrak}
\newcommand{\mc}{\mathcal}
\newcommand{\sumCol}[2]{\textrm{S}_{#1}^{X}({#2})}
\newcommand{\sumRow}[2]{\textrm{S}_{#1}^{Y}({#2})}
\newcommand{\sumMatrix}[1]{\textrm{S}_{#1}}
\def\N{n}
\def\MI{\textit{MI}}
\def\IA{\textit{IA}}
\def\IAs{\textit{IA}_{\epsilon}}
\newcommand{\defeq}{\stackrel{\textrm{\tiny def}}{=}}
\newcommand{\genP}[2][]{p_{#1}({#2})}
\newcommand{\Px}[2][A]{\genP[X_{#1}]{#2}}
\newcommand{\Py}[2][A]{\genP[Y_{#1}]{#2}}
\newcommand{\Pxy}[2][A]{\genP[X_{#1}Y_{#1}]{#2}}
\newcommand{\trans}[1]{{#1}^{T}}
\newcommand{\eps}[2][\epsilon]{{#2}_{#1}}
\newcommand{\transeps}[2][\epsilon]{{#2}_{#1}^{T}}
\newcommand{\nnrv}[1]{\overline{#1}}
\newcommand{\Hp}[1]{H({#1})}
\newcommand\hs[2][]{\ifthenelse{\equal{#1}{}}%
	{{#2}*\ln{#2}}{{#2}*\log_{#1}{#2}}}
\newcommand{\Hr}[1]{\mathcal{G}(#1)}
\def\gAM{B}
\newcommand{\THr}[1]{\mathcal{G}^\uparrow(#1)}
\newcommand{\BHr}[1]{\mathcal{G}_\downarrow(#1)}
\newcolumntype{x}[1]{>{\centering\let\newline\\\arraybackslash\hspace{0pt}}p{#1}}
\newtheorem{thm}{Theorem}
\newtheorem{lem}{Lemma}
\newtheorem{prop}{Proposition}
\newtheorem{cor}{Corollary}
\newtheorem{definition}{Definition}
\definecolor{codegreen}{rgb}{0,0.6,0}
\definecolor{codegray}{rgb}{0.5,0.5,0.5}
\definecolor{codepurple}{rgb}{0.58,0,0.82}
\definecolor{backcolour}{rgb}{0.95,0.95,0.92}
\lstdefinestyle{mystyle}{
	backgroundcolor=\color{backcolour},   
	commentstyle=\color{codegreen},
	keywordstyle=\color{magenta},
	numberstyle=\tiny\color{codegray},
	stringstyle=\color{codepurple},
	basicstyle=\ttfamily\footnotesize,
	breakatwhitespace=false,         
	breaklines=true,                 
	captionpos=b,                    
	keepspaces=true,                 
	numbers=left,                    
	numbersep=5pt,                  
	showspaces=false,                
	showstringspaces=false,
	showtabs=false,                  
	tabsize=2,
	language=Python
}
\title{Computing Information Agreement
	\thanks{This work has been partially supported by the ``\emph{Istituto Nazionale di Alta 
			Matematica}'' (INdAM).}
}
\renewcommand\@date{{%
		\vspace{-\baselineskip}%
		\large\centering
		\begin{tabular}{@{}c@{}}
			Alberto Casagrande\textsuperscript{1} \\
			\normalsize acasagrande@units.it
		\end{tabular}%
		\quad
		\begin{tabular}{@{}c@{}}
			Francesco Fabris\textsuperscript{1} \\
			\normalsize ffabris@units.it
		\end{tabular}
		\quad 
		\begin{tabular}{@{}c@{}}
		Rossano Girometti\textsuperscript{2} \\
		\normalsize rossano.girometti@uniud.it
		\end{tabular}
		
		\bigskip
		
		\textsuperscript{1}{\normalsize{}Dipartimento di Matematica e Geoscienze,\\Universit\`a degli Studi di Trieste, Italy}\par
		\textsuperscript{2}{\normalsize{}Istituto di Radiologia, Dipartimento di Area Medica\\Universit\`a degli Studi di Udine,\\
		Ospedale S. Maria della Misericordia, Italy}
		
		\bigskip
		
		\today
}}
\begin{document}

\maketitle

\begin{abstract}
	Agreement measures are useful to both compare different evaluations of
	the same diagnostic outcomes and
	validate new rating systems or devices.
	Information Agreement (\IA{}) is an 
	information-theoretic-based agreement 
	measure introduced to overcome all the limitations and 
	alleged pitfalls of Cohen's Kappa. 
	However, it is only able to deal with agreement matrices 
	whose values are positive natural numbers. 
	This work extends \IA{} admitting also $0$ as a possible value 
	for the agreement matrix cells. 
\end{abstract}


\section{Basic Notions}\label{sec:basics}

Let $\mf{X}$ and $\mf{Y}$ be two raters that individually classify 
the instances of same non-empty data set $\mathcal{D}$
as belonging to one among $\N{}$ possible classes, where $\N{}$ is 
greater then $1$. 
Their combined classifications produce an \emph{agreement matrix} $A$ 
that is a $\N{}\times \N{}$-matrix whose cells $A[y][x]$ report how many 
instances of $D$ were classified, at the same 
time, as belonging to the classes $y$ and $x$ by $\mf{Y}$ and $\mf{X}$, 
respectively. 

Since $|\mathcal{D}|=\sum_{y=1}^{\N{}}\sum_{x=1}^{\N{}} A[y][x] > 0$, 
the probability for a randomly selected instance of $\mathcal{D}$ to 
be classified at the same time as belonging to the classes 
$y$ and $x$ by $\mf{Y}$ and $\mf{X}$, $\Pxy{y,x}$, equals 
$A[y][x]/\sumMatrix{A}$ where 
$\sumMatrix{A}$ is the sum of all the values in $A$, i.e., 
$\sumMatrix{A}\defeq \sum_{y=1}^{\N{}}\sum_{x=1}^{\N{}} A[y][x]$.
Since any agreement matrix contains at least one positive value, $\sumMatrix{A}$ must be greater than $0$ too.

The probability for an instance of $\mathcal{D}$ 
to be put in the class $y$ by $\mf{Y}$ is denoted by $\Py{y}$ and  
it equals $\sumRow{A}{y}/\sumMatrix{A}$, where 
$\sumRow{A}{y}$ is the sum of all the values in 
the row $y$, 
i.e., $\sumRow{A}{y}\defeq \sum_{x=1}^{\N{}} A[y][x]$.
Analogously, the odd for the same instance to be classified 
in class $x$ by $\mf{X}$ is  $\Px{x} = \sumCol{A}{y}/\sumMatrix{A}$
where $\sumCol{A}{x}\defeq \sum_{y=1}^{\N{}} A[y][x]$.

Let $Z$ and $W$ be two random variables. 
The \textit{Shannon entropy}, $\Hp{Z}$, of $Z$~\cite{Shannon} 
evaluates  the information carried by $Z$ itself. 
In the general case, it is formally defined as\begin{equation}
\Hp{Z} \defeq - \sum_{z \in \mc{Z}} \genP[Z]{z} \log_2 \genP[Z]{z}
\label{Entropy:def}
\end{equation}
where $\genP[Z]{z}$ is the probability for $Z$ to get the value $z$ and $\mc{Z}$ is the set of all the possible values for it.  
Without making any assumption on $\genP[Z]{z}$,  
$\Hp{Z}$ can be proved to belong to the closed interval $[0,\log_2 |\mc{Z}|]$. 
It is worth to underline that, since $0$ is not 
included in the domain of the logarithmic function, $\Hp{Z}$ is 
well-defined if and only if $\genP[Z]{z}>0$ for all $z \in \mc{Z}$.
Moreover, the following proposition holds.
\begin{prop}\label{prop:Hgt0}
If $\Hp{Z}$ is well-defined 
and $|\mc{Z}|>1$, 
$\Hp{Z}>0$. 
\end{prop}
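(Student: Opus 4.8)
The plan is to show that each summand in the entropy formula~\eqref{Entropy:def} is nonnegative and that at least one of them is strictly positive, so that the total sum must exceed $0$.

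First I would record the consequences of the two hypotheses. Since $\Hp{Z}$ is well-defined, $\genP[Z]{z}>0$ for every $z \in \mc{Z}$, so each logarithm in~\eqref{Entropy:def} is applied to a strictly positive argument. Moreover, because $\sum_{z \in \mc{Z}}\genP[Z]{z}=1$ and $|\mc{Z}|>1$, no single probability can equal $1$: were some $\genP[Z]{z}$ equal to $1$, the remaining (at least one) probabilities would have to sum to $0$, contradicting their strict positivity. Hence $0<\genP[Z]{z}<1$ for every $z \in \mc{Z}$.

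The key elementary fact is then that, for any real $p$ with $0<p<1$, one has $\log_2 p<0$ and therefore $-p\log_2 p>0$. Applying this termwise, every summand $-\genP[Z]{z}\log_2\genP[Z]{z}$ is strictly positive; as the sum in~\eqref{Entropy:def} is a finite sum of strictly positive reals, it is itself strictly positive, giving $\Hp{Z}>0$.

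I expect no serious obstacle here; the only point needing care is the interplay of the two hypotheses. Well-definedness alone guarantees that each term is nonnegative, hence $\Hp{Z}\geq 0$, but a term vanishes exactly when its probability equals $1$. It is the assumption $|\mc{Z}|>1$, combined with the strict positivity of all probabilities, that excludes this possibility and upgrades the inequality to a strict one.
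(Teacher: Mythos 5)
Your proof is correct and follows essentially the same route as the paper's: both use well-definedness to get $\genP[Z]{z}>0$, combine $\sum_{z}\genP[Z]{z}=1$ with $|\mc{Z}|>1$ to rule out any probability equalling $1$, and conclude that every term $-\genP[Z]{z}\log_2\genP[Z]{z}$ is strictly positive. The only cosmetic difference is that you argue directly that each summand is positive, while the paper phrases it as ``$\Hp{Z}=0$ only if some probability equals $1$'' and then derives a contradiction.
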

\begin{proof}
If $\Hp{Z}$ is well-defined, then 
$\genP[Z]{z}>0$ for all $z \in \mc{Z}$. 
Thus, $\genP[Z]{z} \in (0,1]$, $\log_2 \genP[Z]{z}$ is 
non-positive, and so 
$\genP[Z]{z} * \log_2 \genP[Z]{z}$ is. 
It follows that $\Hp{Z}$ equals $0$ if and only if 
all its terms -- i.e., $\genP[Z]{z} * \log_2 \genP[Z]{z}$ -- equal 
$0$, but this exclusively happens when $\genP[Z]{z} = 1$.
However,  by definition of probability function, 
$\sum_{z \in \mc{Z}}\genP[Z]{z} = 1$. 
We can conclude that either $|\mc{Z}|=1$, which contradicts 
the proposition's hypothesis, 
or $0<\genP[Z]{z} < 1$ for all $z \in \mc{Z}$ and $\Hp{Z}>0$.\qed
\end{proof}

The \emph{conditional entropy of $W$ given $Z$}~\cite{Shannon}
measures the quantity of information in $W$ when an insight of $Z$ is available and it is defined as
\begin{equation}\label{condH:def}
\Hp{W/Z} \defeq -\sum_{z \in \mc{Z}}\sum_{w \in \mc{W}} \genP[ZW]{w,z}
\log_2 \frac{\genP[ZW]{w,z}}{\genP[Z]{z}}
\end{equation}
where $\genP[ZW]{w,z}$ is the \emph{joint probability} for both $Z$ and $W$ to 
get the values $z$ and $w$  at the same time and $\mc{W}$ is the set of 
all the possible values for $w$.

The \emph{mutual information} $\MI(Z,W)$ 
measures how far are $Z$ and $W$ from 
being independent, i.e., it gauges how much the values that they assume 
are related still being potentially different.
$\MI(Z,W)$ is formally defined as:
\begin{equation}\label{MI:def}
\MI{}(Z,W) \defeq  \sum_{z \in \mc{Z}}\sum_{w \in \mc{W}}
\genP[ZW]{w,z}\log\frac{\genP[ZW]{w,z}}{\genP[Z]{z}*\genP[W]{w}}
\end{equation}
and it is easy to prove that 
\begin{equation}
\MI{}(Z,W) = \Hp{Z}+\Hp{W}-\Hp{ZW} =\MI{}(W,Z) \geq 0
\label{MXY:eq}
\end{equation}

Given the probability distributions $P_{X_A}=\{\Px{x}\}_{x}$,  $P_{Y_A}=\{\Py{y}\}_{y}$, and  $P_{X_AY_A}=\{\Pxy{y,x}\}_{x,y}$, 
the entropy values for the so-called \emph{marginal random variables} $X_A$ and $Y_A$ -- i.e., $\Hp{X_A}$ and $\Hp{Y_A}$, respectively -- and for the random variable $X_AY_A$ 
-- i.e., $\Hp{X_AY_A}$ --  
can be computed as shown by Eq.~\ref{Entropy:def}. As a consequence, 
the mutual information between $X_A$ and $Y_A$ 
can be evaluated too.
All these quantities are completely determined by the agreement matrix 
because $\Px{x}$, 
$\Py{y}$, and $\Pxy{y,y}$ exclusively depend on $A$ itself.
Moreover, it can be proved that
 $\Hp{X_A}=\Hp{Y_{\trans{A}}}$, $\Hp{Y_A}=\Hp{X_{\trans{A}}}$, and 
$\Hp{X_AY_A}=\Hp{X_{\trans{A}}Y_{\trans{A}}}$
where $\trans{A}$ denotes 
the transposed matrix of $A$, i.e., $\trans{A}[x][y]=A[y][x]$ for 
all rows $y$ and for all columns $x$ in $A$.

The \emph{information agreement} (\IA{}) of $A$~\cite{IA2020} was 
introduced to 
gauge the agreement between the two raters $\mf{X}$ and $\mf{Y}$ on the 
data set $\mathcal{D}$ by considering $A$.
It is formally defined as follows:
\begin{equation}\label{IA:def}
\IA{}(A)\defeq\frac{\MI{}(X_A,Y_A)}{\min\{\Hp{X_A},\Hp{Y_A}\}}.
\end{equation}
It is known that the information agreement 
is not well-defined for all the agreement matrices $A$. 
However, whenever $\IA{}(A)$ is defined, its value belongs to 
the interval $[0,1]$.


\section{Extending \IA{}}

In its original form, the information agreement is not well-defined for 
all the possible agreement matrices $A$.
In particular, since \IA{} is the ratio between $\MI{}(X_A,Y_A)$ and 
$\min\{\Hp{X_A},\Hp{Y_A}\}$ (see Eq.~\ref{IA:def}) and 
$\MI{}(X_A,Y_A)$ equals the sums and subtractions 
of entropies (see Eq.~\ref{MXY:eq}), 
\IA{} is not defined under two circumstances: when at least one entropy 
among $\Hp{X_A}$, $\Hp{Y_A}$, and $\Hp{X_AY_A}$ is not defined and 
when the minimum among $\Hp{X_A}$ and $\Hp{Y_A}$ is $0$.
According to what we noticed in Section~\ref{sec:basics}, 
the former case exclusively occurs when there exist $x,y \in [1,\N{}]$ such that either $\genP[X_A]{x}=0$, $\genP[Y_A]{y}=0$, or 
$\genP[X_AY_A]{y,x}=0$. However, by definition of $\genP[X_A]{x}$, $\genP[Y_A]{y}$, and  $\genP[X_AY_A]{y,x}$, this is equivalent to 
the existence of a value in $A$ that equals $0$.  
As far as the latter case may concern, if both 
$\Hp{X_A}$ and $\Hp{Y_A}$ are well-defined, then 
both $\Hp{X_A}$ and $\Hp{Y_A}$ are greater than $0$ by 
Prop.~\ref{prop:Hgt0} because $\N{}>$ by assumption. 
It follows that $\IA{}(A)$ is well-defined if 
and only if all the values in $A$ are greater than $0$.

Since the logarithmic function is defined and continuous in the interval 
$(0,+\infty)$, one possible solution to overcome the inability of 
computing \IA{} on an agreement matrix $A$ containing some $0$  
is to build a new symbolic agreement matrix $\eps{A}$ 
that replaces all the occurrences of $0$ in $A$ with a real variable 
$\epsilon$.  
The matrix $\eps{A}$ is the \emph{$0$-freed matrix} and it is 
formally defined as follows:
\begin{equation*}
\eps{A}[y][x]\defeq \left\{\begin{array}{ll}
A[y][x]&\textrm{if $A[y][x]\neq 0$}\\
\epsilon&\textrm{otherwise}
\end{array}\right.
\end{equation*} 
where $\epsilon$ is a real variable assuming values in the open interval $(0,+\infty)$.

Because of their definitions, it is easy to see that 
$\Pxy[\eps{A}]{y,x}$,  
$\Px[\eps{A}]{x}$, and $\Py[\eps{A}]{y}$ 
belong to the real interval $(0,1)$ for all $x,y \in [1,n]$ and 
for all $\epsilon \in (0,+\infty)$. It follows that $\Hp{X_{\eps{A}}Y_{\eps{A}}}$, 
$\Hp{X_{\eps{A}}}$, and $\Hp{Y_{\eps{A}}}$ are  well-defined for 
any positive value of $\epsilon$ and so $\IA{}(\eps{A})$ is.
Thus, the limit 
for $\IA{}(\eps{A})$ as $\epsilon$ tends to $0$ from the right 
may be a reasonable estimation for $\IA{}(A)$.

It is worth to underline that, while $\IA{}(A)$, when defined, is 
a value, $\IA{}(\eps{A})$ is a function on $\epsilon$ whose domain 
is open real interval $(0,+\infty)$ and, because 
of this, its limit as $\epsilon$ tends to $0$ from the right 
may not exist.
However, if this limit does exist, then it will be the 
extension-by-continuity of \IA{} over the matrix $A$. 
This limit is the \emph{Information Agreement extension by Continuity} $\IAs{}$ and is formally defined as follows:
\begin{equation}\label{IAs:def}
\IAs{}(A) \defeq \lim_{\epsilon \rightarrow 0^+}\IA{}(\eps{A})
\end{equation}

In the following part of this section, we will prove that 
$\IAs{}(A)$ always exists and we show how to compute it.
This achievement will be eased by the following proposition.
\begin{lem}\label{lem:prob}
Let ${\gAM{}}$ be an $n\times n$-agreement matrix. For all $v,w \in [1,n]$,
$\Px[{\gAM{}}]{v}=\Py[\trans{\gAM{}}]{v}$ and 
$\Pxy[{\gAM{}}]{w,v}=\Pxy[\trans{\gAM{}}]{w,v}$.
\end{lem}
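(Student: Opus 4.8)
The plan is to verify both identities by unfolding the probability definitions of Section~\ref{sec:basics} and substituting two elementary properties of transposition: the defining entrywise relation $\trans{\gAM{}}[a][b]=\gAM{}[b][a]$, and the invariance of the global sum, $\sumMatrix{\trans{\gAM{}}}=\sumMatrix{\gAM{}}$, which holds because transposing only reindexes the cells of $\gAM{}$. Since every probability involved is a ratio with $\sumMatrix{\cdot}$ in the denominator, sum-invariance immediately cancels the denominators and reduces each claimed equality to an identity between sums, or single entries, of $\gAM{}$.

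First I would dispatch the marginal identity $\Px[{\gAM{}}]{v}=\Py[\trans{\gAM{}}]{v}$. Expanding the two sides gives $\Px[{\gAM{}}]{v}=\sumCol{\gAM{}}{v}/\sumMatrix{\gAM{}}=\bigl(\sum_{y=1}^{\N{}}\gAM{}[y][v]\bigr)/\sumMatrix{\gAM{}}$ and $\Py[\trans{\gAM{}}]{v}=\sumRow{\trans{\gAM{}}}{v}/\sumMatrix{\trans{\gAM{}}}=\bigl(\sum_{x=1}^{\N{}}\trans{\gAM{}}[v][x]\bigr)/\sumMatrix{\trans{\gAM{}}}$. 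Substituting $\trans{\gAM{}}[v][x]=\gAM{}[x][v]$ and renaming the dummy index turns the second numerator into $\sum_{x=1}^{\N{}}\gAM{}[x][v]=\sum_{y=1}^{\N{}}\gAM{}[y][v]$, while sum-invariance equates the denominators; this closes the identity, whose only content is that a row sum of $\trans{\gAM{}}$ coincides with the matching column sum of $\gAM{}$.

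Turning to the joint identity $\Pxy[{\gAM{}}]{w,v}=\Pxy[\trans{\gAM{}}]{w,v}$, I would apply the same machinery: by definition $\Pxy[{\gAM{}}]{w,v}=\gAM{}[w][v]/\sumMatrix{\gAM{}}$ and $\Pxy[\trans{\gAM{}}]{w,v}=\trans{\gAM{}}[w][v]/\sumMatrix{\trans{\gAM{}}}$, so cancelling the equal denominators reduces the statement to an identity between the single cells $\gAM{}[w][v]$ and $\trans{\gAM{}}[w][v]$. I expect this last reduction to be the main obstacle, because the transpose exchanges the two axes and hence $\trans{\gAM{}}[w][v]$ unfolds to $\gAM{}[v][w]$; closing the identity therefore forces one to fix, once and for all, how the ordered arguments of the joint probability are associated with the row and the column of the matrix, and to check that this association is applied consistently on both sides. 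Accordingly, I would make the argument-to-axis correspondence fully explicit before matching the two cells, since the whole statement rests on this index bookkeeping rather than on any analytic computation.
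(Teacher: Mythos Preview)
Your treatment of the marginal identity is exactly the paper's: expand the definitions, show $\sumCol{\gAM{}}{v}=\sumRow{\trans{\gAM{}}}{v}$ via the entrywise transpose relation, and invoke $\sumMatrix{\gAM{}}=\sumMatrix{\trans{\gAM{}}}$ to match the denominators.

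For the joint identity your caution is well placed and in fact pinpoints a genuine slip in the statement. Under the paper's conventions one has $\Pxy[{\gAM{}}]{w,v}=\gAM{}[w][v]/\sumMatrix{\gAM{}}$ and $\Pxy[\trans{\gAM{}}]{w,v}=\trans{\gAM{}}[w][v]/\sumMatrix{\trans{\gAM{}}}=\gAM{}[v][w]/\sumMatrix{\gAM{}}$, so the literal claim $\Pxy[{\gAM{}}]{w,v}=\Pxy[\trans{\gAM{}}]{w,v}$ would force $\gAM{}[w][v]=\gAM{}[v][w]$, which fails for a non-symmetric~$\gAM{}$. The paper's own proof actually derives the swapped-argument version $\Pxy[{\gAM{}}]{y,x}=\Pxy[\trans{\gAM{}}]{x,y}$ from $\gAM{}[y][x]=\trans{\gAM{}}[x][y]$; that is the identity which is both true and sufficient for the only downstream use (Lemma~\ref{lem:transpose}, point~3), since the joint entropy sums the same multiset of cell probabilities regardless of how the index pair is ordered. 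So your instinct to make the argument-to-axis correspondence explicit is exactly right, and carrying it through leads you to prove the swapped form rather than the one printed.
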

\begin{proof}
By the definitions of $\sumCol{\gAM{}}{x}$ and $\sumMatrix{\gAM{}}$, $\sumCol{\gAM{}}{x}\defeq \sum_{y=1}^{n} {\gAM{}}[y][x]$ and 
$\sumMatrix{\gAM{}}\defeq \sum_{x=1}^{n} \sum_{y=1}^{n} {\gAM{}}[y][x]$. 
So, because of the definition of $\trans{\gAM{}}$, 
\begin{align*}
\sumCol{\gAM{}}{x}&= \sum_{y=1}^{n} {\gAM{}}[y][x]= \sum_{y=1}^{n} \trans{\gAM{}}[x][y] = \sumRow{\trans{\gAM{}}}{x}
\end{align*}
and, analogously,
\begin{align*}
\sumMatrix{\gAM{}}&= \sum_{x=1}^{n} \sum_{y=1}^{n} {\gAM{}}[y][x] = \sum_{x=1}^{n} \sum_{y=1}^{n} \trans{\gAM{}}[x][y] = \sumMatrix{\trans{\gAM{}}}.
\end{align*}
Since $\Px[{\gAM{}}]{x}\defeq\sumCol{\gAM{}}{x}/\sumMatrix{\gAM{}}$ and 
$\Py[\trans{\gAM{}}]{x} \defeq\sumRow{\trans{\gAM{}}}{x}/\sumMatrix{\trans{\gAM{}}}$ by definition, it follows that $\Px[{\gAM{}}]{x}=\Py[\trans{\gAM{}}]{x}$. 

Moreover, ${\gAM{}}[y][x]=\trans{\gAM{}}[x][y]$ for all $x,y \in [1,n]$ by definition 
of transposed matrix. Hence, $\Pxy[{\gAM{}}]{y,x}=\Py[\trans{\gAM{}}]{x,y}$ for all $x,y \in [1,n]$, because $\Pxy[{\gAM{}}]{x}\defeq\sumCol{\gAM{}}{x}/\sumMatrix{\gAM{}}$ 
and $\Pxy[\trans{\gAM{}}]{x,y}\defeq\Pxy[\trans{\gAM{}}]{x,y}/\sumMatrix{\trans{\gAM{}}}$.
\end{proof}

Thanks to Lemma~\ref{lem:prob} which unravels the relation between 
the probability function associated to an agreement matrix ${\gAM{}}$ 
and that of $\trans{\gAM{}}$, we can easily prove the following 
proposition about the entropy functions.

\begin{lem}\label{lem:transpose}
Let ${\gAM{}}$ be an $n\times n$-agreement matrix such that ${\gAM{}}[y][x]>0$ for all 
rows $y$ and for all columns $x$ in ${\gAM{}}$.
%
The following equalities hold:
\begin{enumerate}
\item 
$\Hp{X_{\gAM{}}}=\Hp{Y_{\trans{\gAM{}}}}$;

\item 
$\Hp{Y_{\gAM{}}}=\Hp{X_{\trans{\gAM{}}}}$;

\item 
$\Hp{X_{\gAM{}}Y_{\gAM{}}}=\Hp{X_{\trans{\gAM{}}}Y_{\trans{\gAM{}}}}$.
\end{enumerate}
\end{lem}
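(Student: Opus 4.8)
The plan is to obtain every equality by plugging the probability identities of Lemma~\ref{lem:prob} directly into the entropy definition~(Eq.~\ref{Entropy:def}). The hypothesis that every cell of ${\gAM{}}$ is strictly positive guarantees that $\Px[{\gAM{}}]{x}$, $\Py[{\gAM{}}]{y}$ and $\Pxy[{\gAM{}}]{y,x}$ all lie in $(0,1)$, so each of the six entropies involved is well-defined and the term-by-term rewritings below are legitimate; the same remark applies to $\trans{\gAM{}}$, which is itself a strictly positive agreement matrix.

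For item~1 I would start from $\Hp{X_{\gAM{}}}=-\sum_{x}\Px[{\gAM{}}]{x}\log_2\Px[{\gAM{}}]{x}$ and replace, factor by factor, $\Px[{\gAM{}}]{x}$ with $\Py[\trans{\gAM{}}]{x}$ as licensed by Lemma~\ref{lem:prob}; the right-hand side is then literally $-\sum_{x}\Py[\trans{\gAM{}}]{x}\log_2\Py[\trans{\gAM{}}]{x}=\Hp{Y_{\trans{\gAM{}}}}$. Item~2 needs no separate computation: applying item~1 to the positive agreement matrix $\trans{\gAM{}}$, whose transpose is ${\gAM{}}$, gives $\Hp{X_{\trans{\gAM{}}}}=\Hp{Y_{\gAM{}}}$, which is exactly the claim.

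Item~3 is the only step that is not a plain substitution. Writing $\Hp{X_{\gAM{}}Y_{\gAM{}}}=-\sum_{y}\sum_{x}\Pxy[{\gAM{}}]{y,x}\log_2\Pxy[{\gAM{}}]{y,x}$ and using the joint identity of Lemma~\ref{lem:prob}, each summand turns into $\Pxy[\trans{\gAM{}}]{x,y}\log_2\Pxy[\trans{\gAM{}}]{x,y}$, i.e.\ with the two class arguments interchanged. Since the double sum runs over all pairs $(y,x)\in[1,n]\times[1,n]$, I would simply relabel the two dummy indices $y\leftrightarrow x$ -- a legitimate move because both range over $[1,n]$ -- converting the expression into $-\sum_{x}\sum_{y}\Pxy[\trans{\gAM{}}]{y,x}\log_2\Pxy[\trans{\gAM{}}]{y,x}=\Hp{X_{\trans{\gAM{}}}Y_{\trans{\gAM{}}}}$. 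Conceptually, this just records that the joint entropy depends only on the multiset of cell probabilities, and transposition permutes the cells of ${\gAM{}}$ without altering their values. The main (and essentially only) obstacle is precisely this bookkeeping: one has to notice that the interchange of the two arguments produced by Lemma~\ref{lem:prob} is absorbed by re-indexing the double sum, so that the total is left unchanged, whereas the marginal identities in items~1 and~2 require no such relabelling.
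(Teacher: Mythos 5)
Your proposal is correct and follows essentially the same route as the paper: item~1 by direct substitution via Lemma~\ref{lem:prob} into the entropy definition, item~2 by applying item~1 to the transposed matrix (whose transpose is the original), and item~3 by using the joint-probability identity and absorbing the swapped arguments through re-indexing the double sum. If anything, your explicit remark about relabelling the dummy indices in item~3 is slightly more careful than the paper's own write-up of that step.
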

\begin{proof}
Let us prove the claim, point by point.
\begin{enumerate}
\item\label{prop:transpose:point1} 
By Eq.~\ref{Entropy:def} and by Lemma~\ref{lem:prob}, it is immediate to see that 
\begin{align*}
\Hp{X_{\gAM{}}}&=-\sum_{x=1}^{n} \Px[{\gAM{}}]{x} \log_2 \Px[{\gAM{}}]{x}\\
&=-\sum_{x=1}^{n} \Py[\trans{\gAM{}}]{x} \log_2 \Py[\trans{\gAM{}}]{x}=\Hp{Y_{\trans{\gAM{}}}}
\end{align*}
\item Let $C$ be the matrix $\trans{\gAM{}}$. 
So, $\Hp{X_{C}}=\Hp{Y_{\trans{C}}}$ by Point~\ref{prop:transpose:point1}.
However, it is easy to see that $\trans{C}=\trans{\left(\trans{\gAM{}}\right)}={\gAM{}}$ and, thus, that  $\Hp{Y_{\gAM{}}}=\Hp{Y_{\trans{C}}}=\Hp{X_{C}}=\Hp{X_{\trans{\gAM{}}}}$.
\item Because of Lemma~\ref{lem:prob}, we know that 
$\Pxy{y,x}=\Pxy[\trans{\gAM{}}]{x,y}$ for any $x,y \in [1,n]$. 
It follows that 
\begin{align*}
\Hp{X_AY_A}&=-\sum_{x=1}^{n} \sum_{y=1}^{n} \Pxy{y,x} \log_2 \Pxy{y,x}\\
&=-\sum_{x=1}^{n} \sum_{y=1}^{n} \Pxy[\trans{\gAM{}}]{x,y} \log_2 \Pxy[\trans{\gAM{}}]{x,y}=\Hp{X_AX_Y}
\end{align*}
\end{enumerate}
This ends the proof of the claim.
\end{proof}
From Lemma~\ref{lem:transpose} trivially follows the following 
claim.
\begin{prop}\label{prop:trans:IA}
Let ${\gAM{}}$ be an $n\times n$-agreement matrix such that
${\gAM{}}[y][x]>0$ for all 
rows $y$ and for all columns $x$ in ${\gAM{}}$. 
It holds that:
\begin{itemize}
	\item $\MI{}(X_{\trans{\gAM{}}},Y_{\trans{\gAM{}}})=\MI{}(X_{\gAM{}},Y_{\gAM{}})$;
	\item $\min\{\Hp{X_{\gAM{}}},\Hp{Y_{\gAM{}}}\}=\min\{\Hp{X_{\trans{\gAM{}}}}, \Hp{Y_{\trans{\gAM{}}}}\}$;
	\item $\IA{}({\gAM{}})=\IA{}(\trans{\gAM{}})$.
\end{itemize}
\end{prop}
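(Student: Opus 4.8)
The plan is to obtain all three equalities as purely algebraic consequences of Lemma~\ref{lem:transpose}, with no new probabilistic argument required. The hypothesis that every cell of $\gAM{}$ is positive guarantees, via Prop.~\ref{prop:Hgt0} together with the standing assumption $n>1$, that all the entropies appearing below are well-defined and that the two denominators $\min\{\Hp{X_{\gAM{}}},\Hp{Y_{\gAM{}}}\}$ and $\min\{\Hp{X_{\trans{\gAM{}}}},\Hp{Y_{\trans{\gAM{}}}}\}$ are strictly positive. Thus the quantities on both sides of each bullet are meaningful before any manipulation begins, and the whole argument reduces to substitution.

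For the first bullet I would expand both mutual informations through the additive decomposition of Eq.~\ref{MXY:eq}, writing $\MI{}(X_{\trans{\gAM{}}},Y_{\trans{\gAM{}}})=\Hp{X_{\trans{\gAM{}}}}+\Hp{Y_{\trans{\gAM{}}}}-\Hp{X_{\trans{\gAM{}}}Y_{\trans{\gAM{}}}}$. Substituting the three identities of Lemma~\ref{lem:transpose}, namely $\Hp{X_{\trans{\gAM{}}}}=\Hp{Y_{\gAM{}}}$, $\Hp{Y_{\trans{\gAM{}}}}=\Hp{X_{\gAM{}}}$, and $\Hp{X_{\trans{\gAM{}}}Y_{\trans{\gAM{}}}}=\Hp{X_{\gAM{}}Y_{\gAM{}}}$, collapses the right-hand side to $\Hp{X_{\gAM{}}}+\Hp{Y_{\gAM{}}}-\Hp{X_{\gAM{}}Y_{\gAM{}}}$, which is exactly $\MI{}(X_{\gAM{}},Y_{\gAM{}})$ by the same equation.

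For the second bullet, the first two identities of Lemma~\ref{lem:transpose} turn the pair $\{\Hp{X_{\trans{\gAM{}}}},\Hp{Y_{\trans{\gAM{}}}}\}$ into the pair $\{\Hp{Y_{\gAM{}}},\Hp{X_{\gAM{}}}\}$, and since the minimum is symmetric in its two arguments this has the same value as $\min\{\Hp{X_{\gAM{}}},\Hp{Y_{\gAM{}}}\}$. The third bullet then follows at once from the definition in Eq.~\ref{IA:def}: the numerator of $\IA{}(\trans{\gAM{}})$ equals that of $\IA{}(\gAM{})$ by the first bullet, the denominators are equal by the second, and both are nonzero, so the two ratios coincide.

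I expect no genuine obstacle here; the statement is, as the surrounding text notes, a trivial corollary. The only care needed is bookkeeping: matching each substitution to the correct labelled identity of Lemma~\ref{lem:transpose} and confirming that the division in the last bullet is legitimate, both of which are already secured by the positivity hypothesis. All the substantive work lives in Lemma~\ref{lem:prob} and Lemma~\ref{lem:transpose}, which were established beforehand.
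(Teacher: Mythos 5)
Your proposal is correct and follows essentially the same route as the paper's own proof: both derive the first two bullets by substituting the identities of Lemma~\ref{lem:transpose} into Eq.~\ref{MXY:eq} and the minimum, and both secure the third bullet via Prop.~\ref{prop:Hgt0} with $n>1$ to ensure the denominator in Eq.~\ref{IA:def} is positive. You merely spell out the substitutions more explicitly than the paper does.
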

\begin{proof}
Due of Lemma~\ref{lem:transpose} and  
Eq.~\ref{MXY:eq}, it is easy to see that, for any $n\times n$-matrix ${\gAM{}}$ whose values are all positive, both
$\MI{}(X_{\trans{\gAM{}}},Y_{\trans{\gAM{}}})$ equals $\MI{}(X_{\gAM{}},Y_{\gAM{}})$  and 
$\min\{\Hp{X_{\gAM{}}},$ $\Hp{Y_{\gAM{}}}\}$ equals $\min\{\Hp{X_{\trans{\gAM{}}}},$ $\Hp{Y_{\trans{\gAM{}}}}\}$.
Moreover, both 
$\Hp{X_{\gAM{}}}$ and $\Hp{Y_{\gAM{}}}$ are well-defined because 
${\gAM{}}[y][x]>0$ for all 
rows $y$ and for all columns $x$ in ${\gAM{}}$ by hypothesis.
Hence, since $n>1$ by assumption  both 
$\Hp{X_{\gAM{}}}$ and $\Hp{Y_{\gAM{}}}$ are greater than $0$  
by Prop.~\ref{prop:Hgt0} and so 
$\min\{\Hp{X_{\gAM{}}},$ $\Hp{Y_{\gAM{}}}\}$  is.
Because of the definition of \IA{} (see Eq.~\ref{IA:def}), 
the claim directly follows.
\end{proof}

When the function $\IA{}(\eps{A})$ is studied,
$\Hp{X_{\eps{A}}}$ can be assumed to be smaller than or equal to 
$\Hp{Y_{\eps{A}}}$ without any loss of generality. 
Indeed, if this is not the case --i.e., if $\Hp{Y_{\eps{A}}}<\Hp{X_{\eps{A}}}$--, 
the function $\IA{}(\transeps{A})$, which equals $\IA{}(\eps{A})$ by 
Prop.~\ref{prop:trans:IA}, 
can be considered in place of $\IA{}(\eps{A})$ itself, and, by  
Lemma~\ref{lem:transpose}, 
we know that $\Hp{X_{\transeps{A}}}=\Hp{Y_{\eps{A}}}<\Hp{X_{\eps{A}}}=\Hp{Y_{\transeps{A}}}$ will hold.

If $\Hp{X_{\eps{A}}}\leq \Hp{Y_{\eps{A}}}$, then  
$\Hp{X_{\eps{A}}}=\min\{\Hp{X_{\eps{A}}},\Hp{Y_{\eps{A}}}\}$. Thus,  
by Eq.~\ref{MXY:eq} and~\ref{IA:def}, 
$\IA{}(\eps{A}) = 1 + (\Hp{Y_{\eps{A}}}-\Hp{X_{\eps{A}}Y_{\eps{A}}})/\Hp{X_{\eps{A}}}$ 
and, because of continuity of $+$ on $\mathbb{R}\times\mathbb{R}$, 
if $\IAs{}(A)$ exists, then 
\begin{equation}\label{eq:limit}
\IAs{}(A)=1+\lim_{\epsilon \rightarrow 0^+}\frac{\Hp{Y_{\eps{A}}}-\Hp{X_{\eps{A}}Y_{\eps{A}}}}{\Hp{X_{\eps{A}}}}.
\end{equation}

In order to evaluate above formula, let us first introduce a 
function to restrict the domain of a generic random variable to 
those values that have probability greater than $0$.
\begin{definition}
Let $Z$ be a random variable getting values from $\mc{Z}$ and 
such that $\genP[Z]{z}$ is the probability for $Z$ to have the value 
$z \in \mc{Z}$. 

The \emph{refined random variable of $Z$}, denoted by 
$\nnrv{Z}$, is a random variable getting values from the set 
$\nnrv{\mc{Z}} \defeq \{z\in \mc{Z} |\ \genP[Z]{z} >0 \}$ 
which contains all the values in $\mc{Z}$ that have 
non-null probability with respect to $\genP[Z]{\cdot}$.
\end{definition}
It is worth to notice that $\genP[Z]{z}=\genP[\nnrv{Z}]{z}$ for any 
value in the domain of $\nnrv{Z}$.

The following proposition relates the entropy functions associated to 
$X_{\eps{A}}$, $Y_{\eps{A}}$, and $X_{\eps{A}}Y_{\eps{A}}$ to those 
associated to $\nnrv{X_{A}}$, $\nnrv{Y_{A}}$, and $\nnrv{X_{A}Y_{A}}$, 
respectively.
\begin{prop}\label{prop:nnrv}
Let ${\gAM{}}$ an agreement matrix.
The following equation holds:
\begin{itemize}
\item $\lim_{\epsilon \rightarrow 0^+}\Hp{X_{\eps{\gAM{}}}}=\Hp{\nnrv{X_{\gAM{}}}}$
\item $\lim_{\epsilon \rightarrow 0^+}\Hp{Y_{\eps{\gAM{}}}}=\Hp{\nnrv{Y_{\gAM{}}}}$
\item $\lim_{\epsilon \rightarrow 0^+}\Hp{X_{\eps{\gAM{}}}Y_{\eps{\gAM{}}}}=\Hp{\nnrv{X_{\gAM{}}Y_{\gAM{}}}}$
\end{itemize}
\end{prop}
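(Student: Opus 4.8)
The plan is to establish the first equality in full and to observe that the remaining two follow by the very same argument, the second by exchanging the roles of rows and columns and the third by regarding the joint distribution as a single random variable indexed by the pairs $(y,x)$. Throughout I exploit that the entropy is a \emph{finite} sum, so that the limit may be taken term by term.

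First I would record the behaviour of the marginal probabilities. Every cell of $\eps{\gAM{}}$ is either the constant $\gAM{}[y][x]$ (when $\gAM{}[y][x]\neq 0$) or the variable $\epsilon$, hence $\lim_{\epsilon\to 0^+}\eps{\gAM{}}[y][x]=\gAM{}[y][x]$ for every cell. Summing over a column and over the whole matrix gives $\lim_{\epsilon\to 0^+}\sumCol{\eps{\gAM{}}}{x}=\sumCol{\gAM{}}{x}$ and $\lim_{\epsilon\to 0^+}\sumMatrix{\eps{\gAM{}}}=\sumMatrix{\gAM{}}$; since $\sumMatrix{\gAM{}}>0$, the denominator stays bounded away from $0$ near $\epsilon=0$ and therefore $\lim_{\epsilon\to 0^+}\Px[\eps{\gAM{}}]{x}=\sumCol{\gAM{}}{x}/\sumMatrix{\gAM{}}=\Px[\gAM{}]{x}$ for every column $x$.

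Next I would split the $n$ columns according to whether $\sumCol{\gAM{}}{x}>0$ or $\sumCol{\gAM{}}{x}=0$. For a column with $\sumCol{\gAM{}}{x}>0$ the limit $\Px[\gAM{}]{x}$ is strictly positive, so by continuity of $t\mapsto t\log_2 t$ on $(0,+\infty)$ the summand $\Px[\eps{\gAM{}}]{x}\log_2\Px[\eps{\gAM{}}]{x}$ converges to $\Px[\gAM{}]{x}\log_2\Px[\gAM{}]{x}$. For a column with $\sumCol{\gAM{}}{x}=0$ every one of its cells equals $\epsilon$ in $\eps{\gAM{}}$, so $\Px[\eps{\gAM{}}]{x}=n\epsilon/\sumMatrix{\eps{\gAM{}}}\to 0$; invoking the classical limit $\lim_{t\to 0^+}t\log_2 t=0$, the corresponding summand vanishes. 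Adding the finitely many limits and recalling that $\genP[\nnrv{X_{\gAM{}}}]{x}=\Px[\gAM{}]{x}$ precisely on the columns with $\sumCol{\gAM{}}{x}>0$, I would conclude
\begin{equation*}
\lim_{\epsilon\to 0^+}\Hp{X_{\eps{\gAM{}}}}=-\sum_{x\,:\,\sumCol{\gAM{}}{x}>0}\Px[\gAM{}]{x}\log_2\Px[\gAM{}]{x}=\Hp{\nnrv{X_{\gAM{}}}}.
\end{equation*}

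The second point is identical after replacing columns by rows ($\sumCol{\gAM{}}{x}$ by $\sumRow{\gAM{}}{y}$, and the marginal $\Py[\eps{\gAM{}}]{y}$ in place of $\Px[\eps{\gAM{}}]{x}$), and the third is identical after treating $(y,x)$ as a single index: the surviving cells are those with $\gAM{}[y][x]>0$, whose probabilities tend to $\Pxy[\gAM{}]{y,x}>0$, while each cell with $\gAM{}[y][x]=0$ contributes $\epsilon/\sumMatrix{\eps{\gAM{}}}\to 0$ and hence a vanishing summand. The only genuinely delicate point in all three cases is the treatment of the vanishing indices, which rests entirely on $\lim_{t\to 0^+}t\log_2 t=0$; once that is granted, the statement reduces to the continuity of elementary operations applied to a finite sum.
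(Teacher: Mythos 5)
Your proposal is correct and follows essentially the same route as the paper's own proof: term-by-term limits of the finite entropy sum, a case split on whether $\sumCol{\gAM{}}{x}$ is positive or zero, continuity for the positive case, and the vanishing of $t\log_2 t$ as $t\to 0^+$ for the null case (which the paper justifies via de l'H\^opital's rule rather than by citing the classical limit directly, an immaterial difference).
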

\begin{proof}
Let us focus on the first equation: the correctness of the other two equations can be proved in an analogous way.
By definition, 
\begin{equation*}
\Hp{X_{\eps{\gAM{}}}}\defeq-\sum_{x=1}^n \Px[\eps{\gAM{}}]{x}*\log_2 \Px[\eps{\gAM{}}]{x}
\end{equation*}
Thus, by the continuity of both $+$ and $*$ on $\mathbb{R}\times\mathbb{R}$,
\begin{equation*}
\lim_{\epsilon \rightarrow 0^+}\Hp{X_{\eps{\gAM{}}}}=-\sum_{x=1}^n \lim_{\epsilon \rightarrow 0^+}\left(\Px[\eps{\gAM{}}]{x}*\log_2 \Px[\eps{\gAM{}}]{x}\right).
\end{equation*}
However, we know that $\Px[\eps{\gAM{}}]{x} \defeq \sumCol{\eps{\gAM{}}}{x}/\sumMatrix{\eps{\gAM{}}}$ and that 
$\sumCol{\eps{\gAM{}}}{x}\defeq \sum_{x=1}^{n} \eps{\gAM{}}[y][x]$ and 
$\sumMatrix{\eps{\gAM{}}}\defeq \sum_{x=1}^{n} \sum_{y=1}^{n} \eps{\gAM{}}[y][x]$.
Since all the values in ${\gAM{}}$ are non-negative, all the non-symbolic 
values in $\eps{\gAM{}}$ are positive by construction.
Thus, because of the continuity of $+$ on $\mathbb{R}\times\mathbb{R}$, 
$\lim_{\epsilon \rightarrow 0^+}\sumMatrix{\eps{\gAM{}}} = \sumMatrix{\gAM{}}$ and, 
since we assumed that every agreement matrix contains at least one 
non-null value, $\sumMatrix{\gAM{}}>0$.
Analogously, 
$\lim_{\epsilon \rightarrow 0^+}\sumCol{\eps{\gAM{}}}{x} = \sumCol{\gAM{}}{x}$ and 
$\sumCol{\gAM{}}{x}\geq 0$.
So, due to the continuity of $/$ on $\mathbb{R}\times\mathbb{R}_{>0}$,
$\lim_{\epsilon \rightarrow 0^+}\Px[\eps{\gAM{}}]{x} = 0$ 
if and only if $\sumCol{\eps{\gAM{}}}{x} = n*\epsilon$ or, equivalently,
if and only if $\sumCol{\gAM{}}{x} = 0$. 

So, every time  $\sumCol{\gAM{}}{x} > 0$, $\Px[B]{x} >0$ by definition 
of $\Px[B]{x}$ and 
\begin{equation*}
\lim_{\epsilon \rightarrow 0^+}\left(\Px[\eps{\gAM{}}]{x}*\log_2 \Px[\eps{\gAM{}}]{x}\right) = \Px[B]{x}*\log_2 \Px[B]{x} < 0
\end{equation*}
because of the continuity of both $*$ 
on $\mathbb{R}\times\mathbb{R}$ and $\log$ on $\mathbb{R}\times\mathbb{R}_{>0}$.
If instead $\sumCol{\gAM{}}{x} = 0$, it easy to prove, by using the de 
l'H\^opital's rule, that the limit for $\Px[\eps{\gAM{}}]{x}*\log_2 \Px[\eps{\gAM{}}]{x}$ as $\epsilon$ tends to $0$ from the right is $0$.

It follows that
\begin{align*}
\lim_{\epsilon \rightarrow 0^+}\Hp{X_{\eps{\gAM{}}}}&=-\sum_{x=1}^n \lim_{\epsilon \rightarrow 0^+}\left(\Px[\eps{\gAM{}}]{x}*\log_2 \Px[\eps{\gAM{}}]{x}\right)\\
&=-\left(\sum_{x \in \nnrv{[1,n]}} \Px[B]{x}*\log_2 \Px[B]{x}\right)-\sum_{x \in [1,n]\setminus \nnrv{[1,n]}} 0,
\end{align*}
where $\nnrv{[1,n]}$ is the set $\{x \in [1,n]\ |\ \Px[B]{x}>0\}$, 
and, by definition of $\nnrv{X_B}$, 
\begin{align*}
\lim_{\epsilon \rightarrow 0^+}\Hp{X_{\eps{\gAM{}}}}=\Hp{\nnrv{X_B}}.
\end{align*}
This concludes the proof for the first equation in the claim. The 
proof of the correctness of the remaining equations is analogous. 
\end{proof}

Thanks to the continuity of both $-$ on $\mathbb{R}\times\mathbb{R}$ and 
$/$ on $\mathbb{R}\times\mathbb{R}_{>0}$, 
Prop.~\ref{prop:nnrv} proves that, whenever $\Hp{\nnrv{X_{A}}}$ is greater than $0$ and smaller than $\Hp{\nnrv{Y_{A}}}$, 
$\IAs{}(A)$ exists and it can be easily computed as 
$\IAs{}(A)=1+({\Hp{\nnrv{Y_{A}}}-\Hp{\nnrv{X_{A}}\nnrv{Y_{A}}}})/{\Hp{\nnrv{X_{A}}}}$.
%
This statement is summarized in the following theorem.

\begin{thm}\label{theo:easy}
Let ${\gAM{}}$ be an $n\times n$-agreement matrix. 
If 
$0 < \Hp{\nnrv{X_{\gAM{}}}}\leq \Hp{\nnrv{Y_{\gAM{}}}}$,
then $\IAs{}({\gAM{}})$ exists and it equals:
\begin{equation*} \IAs{}({\gAM{}})=1+\frac{\Hp{\nnrv{Y_{\gAM{}}}}-\Hp{\nnrv{X_{\gAM{}}}\nnrv{Y_{\gAM{}}}}}{\Hp{\nnrv{X_{\gAM{}}}}}.
\end{equation*}
\end{thm}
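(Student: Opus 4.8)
The plan is to write $\IA{}(\eps{\gAM{}})$ as an explicit quotient of three entropy functions of $\epsilon$, pass each ingredient to its limit with Prop.~\ref{prop:nnrv}, and then recombine the limits using the continuity of the elementary operations. By Eq.~\ref{MXY:eq} the numerator of $\IA{}(\eps{\gAM{}})$ is $\MI{}(X_{\eps{\gAM{}}},Y_{\eps{\gAM{}}})=\Hp{X_{\eps{\gAM{}}}}+\Hp{Y_{\eps{\gAM{}}}}-\Hp{X_{\eps{\gAM{}}}Y_{\eps{\gAM{}}}}$, while its denominator is $\min\{\Hp{X_{\eps{\gAM{}}}},\Hp{Y_{\eps{\gAM{}}}}\}$; both are well-defined for every $\epsilon\in(0,+\infty)$, as already observed, so $\IA{}(\eps{\gAM{}})$ is a genuine real-valued function on $(0,+\infty)$ and taking its right limit at $0$ is meaningful.

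First I would handle the numerator. Its three summands tend, respectively, to $\Hp{\nnrv{X_{\gAM{}}}}$, $\Hp{\nnrv{Y_{\gAM{}}}}$, and $\Hp{\nnrv{X_{\gAM{}}}\nnrv{Y_{\gAM{}}}}$ by Prop.~\ref{prop:nnrv}, so by the continuity of $+$ and $-$ on $\mathbb{R}\times\mathbb{R}$ the numerator converges to $\Hp{\nnrv{X_{\gAM{}}}}+\Hp{\nnrv{Y_{\gAM{}}}}-\Hp{\nnrv{X_{\gAM{}}}\nnrv{Y_{\gAM{}}}}$. Next I would treat the denominator: writing $\min\{a,b\}=(a+b-|a-b|)/2$ exhibits $\min$ as a continuous function of its two arguments, so by Prop.~\ref{prop:nnrv} again the denominator converges to $\min\{\Hp{\nnrv{X_{\gAM{}}}},\Hp{\nnrv{Y_{\gAM{}}}}\}$, which by the hypothesis $\Hp{\nnrv{X_{\gAM{}}}}\leq\Hp{\nnrv{Y_{\gAM{}}}}$ equals $\Hp{\nnrv{X_{\gAM{}}}}$.

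The decisive point is that this limiting denominator is strictly positive: the hypothesis $\Hp{\nnrv{X_{\gAM{}}}}>0$ places the limit in $\mathbb{R}_{>0}$, which is exactly what licenses the continuity of $/$ on $\mathbb{R}\times\mathbb{R}_{>0}$. With the numerator limit and the positive denominator limit both in hand, the quotient limit exists and equals the quotient of the two limits, so $\IAs{}({\gAM{}})$ exists; a one-line simplification—splitting the fraction and using $\Hp{\nnrv{X_{\gAM{}}}}/\Hp{\nnrv{X_{\gAM{}}}}=1$—then produces the claimed closed form $1+(\Hp{\nnrv{Y_{\gAM{}}}}-\Hp{\nnrv{X_{\gAM{}}}\nnrv{Y_{\gAM{}}}})/\Hp{\nnrv{X_{\gAM{}}}}$.

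I expect the only real care-point to be the licensing of the division: the continuity of $/$ may be invoked only at points whose second coordinate is nonzero, so the role of $\Hp{\nnrv{X_{\gAM{}}}}>0$ must be stated explicitly before the quotient limit is taken. Handling the denominator through the continuity of $\min$—rather than through the pointwise case-split behind Eq.~\ref{eq:limit}—is what keeps the argument clean, since it requires only the ordering of the \emph{limits} $\Hp{\nnrv{X_{\gAM{}}}}\leq\Hp{\nnrv{Y_{\gAM{}}}}$ and not the pointwise inequality $\Hp{X_{\eps{\gAM{}}}}\leq\Hp{Y_{\eps{\gAM{}}}}$ for all small $\epsilon$, which need not hold.
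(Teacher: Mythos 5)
Your proof is correct, and it reaches the conclusion by a slightly different route than the paper. The paper first invokes Prop.~\ref{prop:trans:IA} to assume without loss of generality that $\Hp{X_{\eps{\gAM{}}}}\leq \Hp{Y_{\eps{\gAM{}}}}$, rewrites $\IA{}(\eps{\gAM{}})$ pointwise as $1+(\Hp{Y_{\eps{\gAM{}}}}-\Hp{X_{\eps{\gAM{}}}Y_{\eps{\gAM{}}}})/\Hp{X_{\eps{\gAM{}}}}$ (Eq.~\ref{eq:limit}), and only then passes to the limit via Prop.~\ref{prop:nnrv} and the continuity of $-$ and of $/$ on $\mathbb{R}\times\mathbb{R}_{>0}$. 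You instead keep the quotient in the form $\MI{}(X_{\eps{\gAM{}}},Y_{\eps{\gAM{}}})/\min\{\Hp{X_{\eps{\gAM{}}}},\Hp{Y_{\eps{\gAM{}}}}\}$ and take the limit of the $\min$ directly, using its joint continuity via $\min\{a,b\}=(a+b-|a-b|)/2$. Both arguments hinge on the same two ingredients --- Prop.~\ref{prop:nnrv} for the three entropy limits and the hypothesis $\Hp{\nnrv{X_{\gAM{}}}}>0$ to license the division --- so the substance is shared. What your variant buys is exactly the point you flag at the end: the hypothesis orders the \emph{limits} $\Hp{\nnrv{X_{\gAM{}}}}\leq\Hp{\nnrv{Y_{\gAM{}}}}$, which does not by itself guarantee the pointwise inequality $\Hp{X_{\eps{\gAM{}}}}\leq\Hp{Y_{\eps{\gAM{}}}}$ on a right-neighbourhood of $0$ (e.g.\ when the two limits coincide the ordering may oscillate), so the paper's pointwise rewriting strictly speaking needs an extra word --- either a subsequence argument or the observation that the two candidate expressions have the same limit when the entropies agree. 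Your $\min$-continuity formulation absorbs that case uniformly; the paper's formulation delivers the answer already in the final algebraic form. Either way the closed form $1+(\Hp{\nnrv{Y_{\gAM{}}}}-\Hp{\nnrv{X_{\gAM{}}}\nnrv{Y_{\gAM{}}}})/\Hp{\nnrv{X_{\gAM{}}}}$ follows as you describe.
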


Intriguingly, Lemma~\ref{lem:transpose} can be extended to deal with 
refined random variables.

\begin{lem}~\label{lem:transpose_nonnull}
	Let ${\gAM{}}$ be an agreement matrix.
	%
	The following equalities hold:
	\begin{enumerate}
		\item 
		$\Hp{\nnrv{X_{\gAM{}}}}=\Hp{\nnrv{Y_{\trans{\gAM{}}}}}$;
		
		\item 
		$\Hp{\nnrv{Y_{\gAM{}}}}=\Hp{\nnrv{X_{\trans{\gAM{}}}}}$;
		
		\item 
		$\Hp{\nnrv{X_{\gAM{}}Y_{\gAM{}}}}=\Hp{\nnrv{X_{\trans{\gAM{}}}Y_{\trans{\gAM{}}}}}$.
	\end{enumerate}
\end{lem}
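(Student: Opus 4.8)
The plan is to reduce all three equalities to Lemma~\ref{lem:prob}, exploiting the elementary fact that the entropy of a refined random variable depends only on the collection of its positive probability values: it is the sum of $-p\log_2 p$ over the refined support, and is therefore invariant under any probability-preserving bijection of the outcomes. This is precisely what makes the refinement harmless when zeros are present. If two distributions coincide as functions of the index (or coincide after a fixed relabelling of the index set), then their refined supports correspond and the summed terms are identical, so the entropies are equal. The whole argument is thus a transcription of Lemma~\ref{lem:transpose} with the full index set $[1,n]$ replaced by the refined support, and with Lemma~\ref{lem:prob} supplying the needed coincidence of distributions.

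For Point~1, I would invoke Lemma~\ref{lem:prob} to obtain $\Px[{\gAM{}}]{v}=\Py[\trans{\gAM{}}]{v}$ for every $v\in[1,n]$. Since these are the \emph{same} function of $v$, the index $v$ has positive probability under $X_{\gAM{}}$ if and only if it has positive probability under $Y_{\trans{\gAM{}}}$; hence the two refined supports coincide and the probabilities agree on them. Writing each entropy as the sum of $-p\log_2 p$ over the common refined support then gives $\Hp{\nnrv{X_{\gAM{}}}}=\Hp{\nnrv{Y_{\trans{\gAM{}}}}}$ at once. Point~2 follows by the same double-transpose device used in Lemma~\ref{lem:transpose}: set $C=\trans{\gAM{}}$, apply Point~1 to $C$, and use $\trans{C}={\gAM{}}$ to convert $\Hp{\nnrv{X_{C}}}=\Hp{\nnrv{Y_{\trans{C}}}}$ into $\Hp{\nnrv{X_{\trans{\gAM{}}}}}=\Hp{\nnrv{Y_{\gAM{}}}}$.

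Point~3 is the only place where the correspondence is not the identity on indices. By Lemma~\ref{lem:prob} the joint law satisfies $\Pxy[{\gAM{}}]{y,x}=\Pxy[\trans{\gAM{}}]{x,y}$, so the distributions of $X_{\gAM{}}Y_{\gAM{}}$ and of $X_{\trans{\gAM{}}}Y_{\trans{\gAM{}}}$ are matched by the coordinate-swap bijection $(y,x)\mapsto(x,y)$ on the set of index pairs. Because this bijection preserves probabilities, the pair $(y,x)$ lies in the refined support of $X_{\gAM{}}Y_{\gAM{}}$ if and only if $(x,y)$ lies in that of $X_{\trans{\gAM{}}}Y_{\trans{\gAM{}}}$, and the two sums of $-p\log_2 p$ coincide term by term under the swap, yielding $\Hp{\nnrv{X_{\gAM{}}Y_{\gAM{}}}}=\Hp{\nnrv{X_{\trans{\gAM{}}}Y_{\trans{\gAM{}}}}}$.

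I do not expect a genuine obstacle here. The only subtlety that distinguishes this statement from Lemma~\ref{lem:transpose} is verifying that refinement respects the transpose/swap correspondence, that is, that the refined supports truly match rather than merely being in bijection with unequal probabilities. Both are secured simultaneously by Lemma~\ref{lem:prob}, which asserts equality of the probability \emph{values}, not just of the supports; consequently no estimate or limit argument (unlike the one needed in Prop.~\ref{prop:nnrv}) enters the proof.
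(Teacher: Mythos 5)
Your proof is correct, but it takes a genuinely different route from the paper's. The paper proves this lemma by a limit argument: it invokes Prop.~\ref{prop:nnrv} to rewrite each refined entropy as the limit, as $\epsilon$ tends to $0^+$, of the corresponding entropy of the $0$-freed matrix $\eps{\gAM{}}$; it then applies Lemma~\ref{lem:transpose} to $\eps{\gAM{}}$ (legitimate because every entry of $\eps{\gAM{}}$ is positive) and uses Prop.~\ref{prop:nnrv} once more on $\trans{\gAM{}}$ to identify the limits on the transposed side. You bypass the $\epsilon$-machinery entirely: you observe that the entropy of a refined variable depends only on its positive probability values, and that Lemma~\ref{lem:prob} supplies a probability-preserving correspondence of outcomes --- the identity on indices for the marginals, the coordinate swap $(y,x)\mapsto(x,y)$ for the joint --- so the refined supports match and the sums of $-p\log_2 p$ agree term by term. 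Your argument is more elementary and self-contained: no limits, no appeal to Lemma~\ref{lem:transpose}, and no need for the positivity hypothesis that lemma carries; it is essentially Lemma~\ref{lem:transpose}'s proof re-run over the refined support. The paper's version buys brevity by reusing already-established machinery, at the cost of routing a purely combinatorial fact through an analytic detour (and of implicitly using that the $0$-freed matrix of $\trans{\gAM{}}$ equals $\trans{\left(\eps{\gAM{}}\right)}$). One small point: be sure to read Lemma~\ref{lem:prob} in the swapped form $\Pxy[{\gAM{}}]{y,x}=\Pxy[\trans{\gAM{}}]{x,y}$ that its proof actually establishes (the displayed statement omits the swap), since your Point~3 relies on exactly that form.
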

\begin{proof}
	By Prop.~\ref{prop:nnrv} 
	$\Hp{\nnrv{X_{\gAM{}}}}$, $\Hp{\nnrv{Y_{\gAM{}}}}$, and 
	$\Hp{\nnrv{X_{\gAM{}}Y_{\gAM{}}}}$ equal the limits 
	as $\epsilon$ tends to $0$ from the right for 
	$\Hp{{X_{\eps{\gAM{}}}}}$, $\Hp{{Y_{\eps{\gAM{}}}}}$, and 
	$\Hp{{X_{\eps{\gAM{}}}Y_{\eps{\gAM{}}}}}$, respectively.
	
	However, by Lemma~\ref{lem:transpose}, 
	$\Hp{{X_{\eps{\gAM{}}}}}=\Hp{{Y_{\trans{\eps{\gAM{}}}}}}$, $\Hp{{Y_{\eps{\gAM{}}}}}=\Hp{{Y_{\trans{\eps{\gAM{}}}}}}$, and 
	$\Hp{{X_{\eps{\gAM{}}}Y_{\eps{\gAM{}}}}}=\Hp{{X_{\trans{\eps{\gAM{}}}}}Y_{\trans{\eps{\gAM{}}}}}$ for any $\epsilon>0$. 
	
	By Prop.~\ref{prop:nnrv}, $\Hp{\nnrv{X_{\trans{\gAM{}}}}}$, $\Hp{\nnrv{Y_{\trans{\gAM{}}}}}$, and 
	$\Hp{\nnrv{X_{\trans{\gAM{}}}Y_{\trans{\gAM{}}}}}$ equal the limits 
	as $\epsilon$ tends to $0$ from the right for 
	$\Hp{{Y_{\trans{\eps{\gAM{}}}}}}$, $\Hp{{Y_{\trans{\eps{\gAM{}}}}}}$, and 
	$\Hp{{X_{\trans{\eps{\gAM{}}}}}Y_{\trans{\eps{\gAM{}}}}}$, respectively.
	This concludes the proof of the claim.
\end{proof}

Thanks to Lemma~\ref{lem:transpose_nonnull}, it is easy to see 
that   $\IAs{}(A)=\IAs{}(\trans{A})$. Moreover, if $\Hp{\nnrv{X_{A}}}> \Hp{\nnrv{Y_{A}}}$, 
then $\Hp{\nnrv{X_{\trans{A}}}} < \Hp{\nnrv{Y_{\trans{A}}}}$ by 
the same lemma. 
Hence, Theorem~\ref{theo:easy} deals with
all the agreement matrices $A$ for which both 
$\Hp{\nnrv{X_{A}}}$ and $\Hp{\nnrv{Y_{A}}}$ are greater than $0$.

\begin{table}[!ht]
	\begin{subtable}[b]{0.47\linewidth}
		\scalebox{0.87}{\begin{minipage}[c]{\textwidth}
				\[
				A^*=\begin{pmatrix}
				a_1&0&\ldots&\ldots&\ldots&\ldots&0\\
				\vdots&\vdots&\ddots&&&&\vdots\\
				a_m&\vdots&&\ddots&&&\vdots\\
				0&\vdots&&&\ddots&&\vdots\\
				\vdots&\vdots&&&&\ddots&\vdots\\
				0&0&\ldots&\ldots&\ldots&\ldots&0\\
				\end{pmatrix}
				\]
		\end{minipage}}
		\caption{An agreement matrix such that 
			$\Hp{\nnrv{X_{A^*}}}=0$ and 
			$\Hp{\nnrv{X_{A^*}}}<\Hp{\nnrv{Y_{A^*}}}$. This 
			matrix does not satisfy the hypothesis of Theorem~\ref{theo:easy}.}\label{table:single_column_A}
	\end{subtable}
	\hfill
	\begin{subtable}[b]{0.47\linewidth}
		\scalebox{0.87}{\begin{minipage}[c]{\textwidth}
				\[
				\hspace{0.5cm}
				\eps{A}^*=\begin{pmatrix}
				a_1&\epsilon&\ldots&\ldots&\ldots&\ldots&\epsilon\\
				\vdots&\vdots&\ddots&&&&\vdots\\
				a_m&\vdots&&\ddots&&&\vdots\\
				\epsilon&\vdots&&&\ddots&&\vdots\\
				\vdots&\vdots&&&&\ddots&\vdots\\
				\epsilon&\epsilon&\ldots&\ldots&\ldots&\ldots&\epsilon\\
				\end{pmatrix}
				\]
		\end{minipage}}
		\caption{This matrix is obtained from the agreement matrix $A^*$ 
			reported in Table~\ref{table:single_column_A} by replacing all 
			the $0$s by the real variable $\epsilon$.}\label{table:A_epsilon}
	\end{subtable}
\end{table}

In order to complete our analysis, it is worth to understand under which 
conditions $\Hp{\nnrv{X_{A}}}$ equals $0$. By definition of entropy,
\begin{equation*}
\Hp{\nnrv{X_{A}}}\defeq - \sum_{x \in \nnrv{[1,n]}} \Px[\nnrv{X_{A}}]{x}*\log_2 \Px[\nnrv{X_{A}}]{x}
\end{equation*}
where $\nnrv{[1,n]}\defeq \{x \in [1,n]\ |\ \Px[X_{A}]{x}>0\}$. 
Since $\sum_{x \in \nnrv{[1,n]}} \Px[\nnrv{X_{A}}]{x}=1$ by definition 
of probability, $\Hp{\nnrv{X_{A}}}=0$ if and only if $\nnrv{[1,n]}$ 
contains exclusively one column $\nnrv{x}$ whose probability is $1$, 
i.e., $\Px[A]{\nnrv{x}}=\genP[\nnrv{X_{A}}]{\nnrv{x}}=1$. 
Because of the definition of $\Px[A]{x}$, this means that $\nnrv{x}$ is the only column in $A$ whose values are not all
$0$ or, equivalently, that $\nnrv{x}$ is the only \emph{non-null} 
column in $A$. 
Thus, to prove the existence of $\IAs{}(A)$ for any agreement matrix 
$A$, we need to solve Eq.~\ref{eq:limit} when $A$ is 
a generic agreement matrix having 
exclusively one 
non-null column or row.
As already observed above, the two cases are symmetrical and 
we can focus on one of the two cases. Let us consider an agreement 
matrix 
having exclusively one non-null column and $m$ non-null rows.
For the 
sake of simplicity and without any loss in generality, 
we will impose that the values different from $0$ are those 
contained in the column $1$ and in the first $m$ rows 
as in the matrix $A^*$ 
depicted by Table~\ref{table:single_column_A}. 
This assumption does not weaken the generality of the 
considered case because the entropy functions and, consequently, 
the information agreement do not take into account the position 
of classification events in the agreement matrix, but exclusively their 
probabilities.

Table~\ref{table:A_epsilon} reports 
the $0$-freed matrix of $A^*$.
Since  
$\sumCol{\eps{A}^*}{x}\defeq \sum_{y=1}^{n} \eps{A}^*[y][x]$ 
and 
$\sumRow{\eps{A}^*}{y}\defeq \sum_{x=1}^{n} \eps{A}^*[y][x]$ 
by definition, it is easy to see that 
\begin{equation}\label{eq:defA*Col}
\sumCol{\eps{A}^*}{x} = \left\{\begin{array}{ll}
(n-m)*\epsilon + \sum_{y=1}^{m} a_y&\textrm{if $x=1$}\\
n*\epsilon& \textrm{otherwise}\\
\end{array}\right.,
\end{equation}
and, analogously,
\begin{equation}\label{eq:defA*Row}
\sumRow{\eps{A}^*}{y} = \left\{\begin{array}{ll}
(n-1)*\epsilon + a_y&\textrm{if $y\in [1,m]$}\\
n*\epsilon& \textrm{otherwise}\\
\end{array}\right.
\end{equation}
As far as $\sumMatrix{\eps{A}^*}$ may concern, it is easy to see 
that $\sumMatrix{\eps{A}^*}=(n^2-m)*\epsilon+\sum_{y=1}^{m} a_y$.

The following preparatory lemma is meant to syntactically simplify 
Eq.~\ref{eq:limit}.

\begin{lem}\label{lemma:simplify_H}
	Let $Z$ be a random variable that assumes values in $\mc{Z}$ and 
	let $\genP[Z]{z}$ be the probability for $Z$ to get the value $z$.
	
	If  $\genP[Z]{z}=f(z)/c$, where $c \in \mathbb{R}\setminus\{0\}$ is 
	a constant value and $f:\mc{Z}\rightarrow \mathbb{R}$ is 
	function such that $\sum_{z \in \mc{Z}} f(z)=c$, then the following 
	equation holds:
	\begin{equation}
	\Hp{Z} = \log_2 c - \frac{1}{c}*\sum_{z \in \mc{Z}} f(z) * \log_2 f(z).
	\end{equation}
\end{lem}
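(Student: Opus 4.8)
The plan is to prove the identity by a direct substitution into the definition of Shannon entropy (Eq.~\ref{Entropy:def}) followed by a splitting of the logarithm of a quotient, and then to use the normalization hypothesis $\sum_{z \in \mc{Z}} f(z) = c$ to collapse the resulting constant term.

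First I would record the domain conditions that make both sides meaningful. From $\genP[Z]{z} = f(z)/c$ together with $\sum_{z \in \mc{Z}} f(z) = c$ it follows that $\sum_{z \in \mc{Z}} \genP[Z]{z} = 1$, so the $\genP[Z]{z}$ form a genuine probability distribution. As discussed in Section~\ref{sec:basics}, $\Hp{Z}$ is well-defined exactly when every $\genP[Z]{z} > 0$, and since the right-hand side contains $\log_2 c$ and the terms $\log_2 f(z)$, whose arguments must be positive, we are in the case $c > 0$ and $f(z) > 0$ for all $z \in \mc{Z}$. This guarantees that every logarithm below is evaluated on a positive argument and that the manipulation is legitimate.

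Next I would substitute $\genP[Z]{z} = f(z)/c$ into Eq.~\ref{Entropy:def} and write $\log_2\!\big(f(z)/c\big) = \log_2 f(z) - \log_2 c$. By linearity of the (finite) sum, this turns $\Hp{Z}$ into the combination of two sums: the term $-\frac{1}{c}\sum_{z \in \mc{Z}} f(z)\log_2 f(z)$, which already coincides with the target summand, and the term $\frac{\log_2 c}{c}\sum_{z \in \mc{Z}} f(z)$. At this point the normalization hypothesis $\sum_{z \in \mc{Z}} f(z) = c$ enters and collapses the second term to exactly $\log_2 c$, yielding the claimed formula.

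I do not expect any genuine obstacle here: once the logarithm of the quotient is split, the argument is a one-line algebraic simplification. The only point that deserves attention is the bookkeeping of the sign/positivity conditions on $c$ and on the values $f(z)$ described above, so that each logarithm is well-defined and the entropy itself is well-defined; beyond that, the result is immediate from the linearity of the sum and the normalization constraint.
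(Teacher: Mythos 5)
Your proposal is correct and follows essentially the same route as the paper's own proof: substitute $\genP[Z]{z}=f(z)/c$ into the entropy definition, split $\log_2(f(z)/c)$ into $\log_2 f(z)-\log_2 c$, and use $\sum_{z\in\mc{Z}}f(z)=c$ to collapse the constant term. Your additional remarks on the positivity of $c$ and of the $f(z)$ are a sensible bit of bookkeeping that the paper leaves implicit.
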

\begin{proof}
	Since $\Hp{Z} \defeq - \sum_{z \in \mc{Z}}  \genP[Z]{z} * \log_2 \genP[Z]{z}$ by definition, it holds that 
	\begin{align*}
	\Hp{Z} &= - \sum_{z \in \mc{Z}}  \genP[Z]{z} * \log_2 \genP[Z]{z}\\
	&= - \sum_{z \in \mc{Z}} \frac{f(z)}{c} * \log_2 \frac{f(c)}{c}\\
	&=-\frac{1}{c}\sum_{z \in \mc{Z}} \left(f(z) * \left(\log_2 f(z) - \log_2 c\right)\right)\\
	&=\frac{1}{c}\left(\left(\sum_{z \in \mc{Z}} f(z) *\log_2 c\right) - \left(\sum_{z \in \mc{Z}} f(z) * \log_2 f(z)\right)\right)\\
	&=\frac{1}{c}\left(\left(\sum_{z \in \mc{Z}} f(z)\right)*\log_2 c - 
	\left(\sum_{z \in \mc{Z}} f(z) * \log_2 f(z)\right) \right)\\
	\end{align*}
	However, $\sum_{z \in \mc{Z}}^n f(z) = c$ by hypothesis and, then, 
	\begin{align*}
	\Hp{Z}
	&=\log_2 c -  \frac{1}{c}*\sum_{z \in \mc{Z}} f(z) * \log_2 f(z)
	\end{align*}
	This concludes the proof of the claim. 
\end{proof}

It is easy to see that if ${\gAM{}}$ is an $n \times n$-agreement matrix 
(potentially, also $0$-freed), then the variable $X_{\gAM{}}$, 
$Y_{\gAM{}}$, $X_{\gAM{}}Y_{\gAM{}}$ satisfy the conditions of 
Lemma~\ref{lemma:simplify_H} and the equations 
\begin{equation}\label{eq:simple1}
\Hp{X_{\gAM{}}} = \log_2 \sumMatrix{\gAM{}} - \frac{1}{\sumMatrix{\gAM{}}}*\sum_{x=1}^n \hs[2]{\sumCol{\gAM{}}{x}},
\end{equation}
\begin{equation}\label{eq:simple2}
\Hp{Y_{\gAM{}}} = \log_2 \sumMatrix{\gAM{}} - \frac{1}{\sumMatrix{\gAM{}}}*\sum_{y=1}^n \hs[2]{\sumRow{\gAM{}}{y}},
\end{equation}
and 
\begin{equation}\label{eq:simple3}
\Hp{X_BY_B} = \log_2 \sumMatrix{\gAM{}} - \frac{1}{\sumMatrix{\gAM{}}}*\sum_{y=1}^n \sum_{x=1}^n \hs[2]{B[y][x]},
\end{equation}
hold. 

Let us introduce the shortcuts 
$\THr{\gAM{}}\defeq (\ln{2})* \sumMatrix{\gAM{}}*\left(\Hp{Y_{\gAM{}}}-\Hp{X_{\gAM{}}Y_{\gAM{}}}\right)$, 
$\BHr{\gAM{}}\defeq (\ln{2})*\sumMatrix{\gAM{}}*\Hp{X_{\gAM{}}}$,
and $\Hr{\gAM{}}\defeq \THr{\gAM{}}/\BHr{\gAM{}}$. It is worth to 
notice that, for all $\epsilon >0$,    $\Hr{\eps{A}^*}=({\Hp{Y_{\eps{A}^*}}-\Hp{X_{\eps{A}^*}Y_{\eps{A}^*}}})/{\Hp{X_{\eps{A}^*}}}$ 
because $\sumMatrix{\eps{A}^*}>0$ for the same values of $\epsilon$ and, thus, 
\begin{equation}\label{def:Hr}
\IAs{}(A^*)=1+\lim_{\epsilon \rightarrow 0^+}\frac{\Hp{Y_{\eps{A}^*}}-\Hp{X_{\eps{A}^*}Y_{\eps{A}^*}}}{\Hp{X_{\eps{A}^*}}}=
1+\lim_{\epsilon \rightarrow 0^+}\Hr{\eps{A}^*}
\end{equation}

From Eq.\ref{eq:simple2}, Eq.\ref{eq:simple3}, and Eq.~\ref{eq:defA*Row}
we can deduce that
\begin{align*}
\THr{\eps{A}^*}&=(\ln{2})*\sumMatrix{\eps{A}^*}*\left(\Hp{Y_{\eps{A}^*}}-\Hp{X_{\eps{A}^*}Y_{\eps{A}^*}}\right)\\
&=\sum_{y=1}^n \sum_{x=1}^n \hs{\eps{A}^*[y][x]} -
\sum_{y=1}^n \hs{\sumRow{\eps{A}^*}{y}}\\
&=\left(\sum_{y=1}^m \hs{a_y}\right) + (n^2-m)*\hs{\epsilon}+\\
&\hskip1cm -\left(\sum_{y=1}^m \hs{(a_y+(n-1)*\epsilon)}\right)+\\
&\hskip1cm- (n-m)*\hs{(n*\epsilon)}\\
&=\left(\sum_{y=1}^m \hs{a_y}\right) + (n-1)*m*\hs{\epsilon}+\\
&\hskip1cm -\left(\sum_{y=1}^m \hs{(a_y+(n-1)*\epsilon)}\right)+\\
&\hskip1cm -(n-m)*(\hs{n})*\epsilon.
\end{align*}
Analogously, from Eq.~\ref{eq:simple1} and Eq.~\ref{eq:defA*Col}, it follows that:
\begin{align*}
\BHr{\eps{A}^*}&=(\ln{2})*\sumMatrix{\eps{A}^*}*\Hp{X_{\eps{A}^*}}\\
&=\sumMatrix{\eps{A}^*}*\left(\ln{\sumMatrix{\eps{A}^*}} - \frac{1}{\sumMatrix{\eps{A}^*}}*\sum_{x=1}^n \hs{\sumCol{\eps{A}^*}{x}}\right)\\%
&=\hs{\sumMatrix{\eps{A}^*}} - \sum_{x=1}^n \hs{\sumCol{\eps{A}^*}{x}}\\
&=\hs{\left((n^2-m)*\epsilon+\sum_{y=1}^m a_y\right)} +\\
&\hskip1cm - \hs{\left((n-m)*\epsilon+\sum_{y=1}^m a_y\right)} +\\
&\hskip1cm -(n-1)*\hs{n*\epsilon}\\
&=\hs{\left((n^2-m)*\epsilon+\sum_{y=1}^m a_y\right)} +\\
&\hskip1cm - \hs{\left((n-m)*\epsilon+\sum_{y=1}^m a_y\right)} +\\
&\hskip1cm -(n-1)*n*\hs{\epsilon} -(n-1)*(\hs{n})*\epsilon.
\end{align*}
Due to the continuity of $+$ and $*$ on,  
$\mathbb{R}\times \mathbb{R}$ and that of $\log$ on 
$\mathbb{R}\times \mathbb{R}_{>0}$, 
\begin{align*}
\lim_{\epsilon \rightarrow 0^+}\THr{\eps{A}^*}
&=\left(\sum_{y=1}^m \lim_{\epsilon \rightarrow 0^+}\hs{a_y}\right) + (n-1)*m*\lim_{\epsilon \rightarrow 0^+}\hs{\epsilon} +\\
&\hskip1cm -\left(\sum_{y=1}^m \lim_{\epsilon \rightarrow 0^+}\hs{(a_y+(n-1)*\epsilon)}\right)+\\
&\hskip1cm- (n-m)*(\hs{n})*\lim_{\epsilon \rightarrow 0^+}\left(\epsilon\right)\\
&=\left(\sum_{y=1}^m\hs{a_y}\right) + 0 - \left(\sum_{y=1}^m\hs{a_y}\right) - 0 - 0\\
&= 0
\end{align*}
and, in the same way,
\begin{align*}
\lim_{\epsilon \rightarrow 0^+}\BHr{\eps{A}^*}
&=\lim_{\epsilon \rightarrow 0^+}\hs{\left((n^2-m)*\epsilon+\sum_{y=1}^m a_y\right)} +\\
&\hskip0.5cm - \lim_{\epsilon \rightarrow 0^+}\hs{\left((n-m)*\epsilon+\sum_{y=1}^m a_y\right)}+\\
&\hskip0.5cm -(n-1)*n*\lim_{\epsilon \rightarrow 0^+}\hs{\epsilon}
-(n-1)*(\hs{n})*\lim_{\epsilon \rightarrow 0^+}\epsilon
\end{align*}
\begin{align*}
\lim_{\epsilon \rightarrow 0^+}\BHr{\eps{A}^*}
&=\hs{\left(\sum_{y=1}^m a_y\right)} - 
\hs{\left(\sum_{y=1}^m a_y\right)}+\\
&\hskip0.5cm -0 -0
=0.
\end{align*}

So, the limit of $\Hr{\eps{A}^*}$ cannot be directly evaluated as the ratio between the limits of $\THr{\eps{A}^*}$ and $\BHr{\eps{A}^*}$ 
because it gives rise to the indeterminate form $0/0$. 

However, if we prove that the derivative of $\BHr{\eps{A}^*}$ 
on $\epsilon$ 
is different from $0$ in a neighbourhood of $\epsilon=0$, the 
all the conditions of de l'H\^opital's rule (e.g., see~\cite{taylor52,hass2017thomas}) will be satisfied and, by the same rule, if 
\begin{align*}
\lim_{\epsilon \rightarrow 0^+}\left({\frac{\partial \THr{\eps{A}^*}}{\partial \epsilon}}\left({\frac{\partial \BHr{\eps{A}^*}}{\partial \epsilon}}\right)^{-1}\right) \in \mathcal{R}\cup \{-\infty, +\infty\}
\end{align*}
will exist, then
\begin{align}\label{eq:delhop1}
\lim_{\epsilon \rightarrow 0^+}\Hr{\eps{A}^*}
&=\lim_{\epsilon \rightarrow 0^+}\left({\frac{\partial \THr{\eps{A}^*}}{\partial \epsilon}}\left({\frac{\partial \BHr{\eps{A}^*}}{\partial \epsilon}}\right)^{-1}\right)
\end{align}
Thus, we will first compute the derivative of 
$\BHr{\eps{A}^*}$ on $\epsilon$ and, then, the limit for it as 
$\epsilon$ tends to $0$ from the right; 
if the latter exists and differs from $0$, 
then we will know that there exists a right-neighbourhood of $\epsilon=0$ 
such that its image through the derivative of 
$\BHr{\eps{A}^*}$ on $\epsilon$ does not contain $0$ and 
we can apply the de l'H\^opital's rule.

The the derivative of $\BHr{\eps{A}^*}$ on $\epsilon$ is:
\begin{align*}
\frac{\partial \BHr{\eps{A}^*}}{\partial \epsilon}&=
(n^2-m)*\ln{\left((n^2-m)*\epsilon+\sum_{y=1}^{m
}{a_{y}}\right)}+n^2-m+\\
&\hskip0.5cm-(n-m)*\ln{\left((n-m)*\epsilon+\sum_{y=1}^{m
}{a_{y}}\right)}-(n-m)+\\
&\hskip0.5cm-(n-1)*n*\ln{\epsilon} -(n-1)*n-(n-1)*\hs{n}\\
&=(n^2-m)*\ln{\left((n^2-m)*\epsilon+\sum_{y=1}^{m
}{a_{y}}\right)}-(n^2-n)*\ln{\epsilon}+\\
&\hskip0.5cm-(n-m)*\ln{\left((n-m)*\epsilon+\sum_{y=1}^{m
}{a_{y}}\right)}-(n^2-n)*\ln{n}
\end{align*}
and the limit for it as $\epsilon$ tends to $0$ is:
\begin{align*}
\lim_{\epsilon \rightarrow 0^+}\frac{\partial \BHr{\eps{A}^*}}{\partial \epsilon}
&=(n^2-m)*\lim_{\epsilon \rightarrow 0^+}\ln{ \left((n^2-m)*\epsilon+\sum_{y=1}^{m
}{a_{y}}\right)}+\\
&\hskip0.5cm-(n^2-n)*\lim_{\epsilon \rightarrow 0^+}\ln{\epsilon}
-(n^2-n)*\ln{n}\\
&\hskip0.5cm-(n-m)*\lim_{\epsilon \rightarrow 0^+}\ln{ \left((n-m)*\epsilon+\sum_{y=1}^{m}{a_{y}}\right)}
= \infty,
\end{align*}
hence, we can apply the de l'H\^opital's rule. 

The derivative of $\THr{\eps{A}^*}$ on $\epsilon$ is
\begin{align*}
\frac{\partial \THr{\eps{A}^*}}{\partial \epsilon}&=0+(n-1)*m*\ln{ \epsilon}+(n-1)*m+\\
&\hskip0.5cm - \sum_{y=1}^{m}{\left((n-1)*\ln{ \left((n-1)*\epsilon+a_{y}\right)}+(n-1)\right)}+\\
&\hskip0.5cm -(n-m)*n*\ln{n}\\
&=(n-1)*m*\ln{\epsilon} -(n-m)*n*\ln{n}+\phantom{\sum_{y=1}^{m}}\\
&\hskip0.5cm - (n-1)*\sum_{y=1}^{m}{\ln{ \left((n-1)*\epsilon+a_{y}\right)}}.
\end{align*}

The two derivatives do not share any common factor and 
they cannot be simplified. 
Moreover, 
the limit for $\Hr{\eps{A}^*}$ can not be evaluated as 
the ratio between the limits of the derivatives  of $\THr{\eps{A}^*}$ 
and $\BHr{\eps{A}^*}$ because it has the form 
$-\infty/\infty$, which is indeterminate. As a matter of fact, 
\begin{align*}
\lim_{\epsilon \rightarrow 0^+}\frac{\partial \THr{\eps{A}^*}}{\partial \epsilon}
&=(n-1)*m*\lim_{\epsilon \rightarrow 0^+} (\ln{\epsilon})
-(n-m)*n*\ln{n} +\\
&\hskip0.5cm-(n-1)*\sum_{y=1}^{m}{\lim_{\epsilon \rightarrow 0^+}\ln{ \left((n-1)*\epsilon+a_{y}\right)}}
= -\infty, 
\end{align*}

Luckly, de l'H\^opital's rule can be applied again because 
 the second derivative of $\BHr{\eps{A}^*}$ on $\epsilon$ is:
\begin{align*}
\frac{\partial^2 \BHr{\eps{A}^*}}{\partial \epsilon^2}&=
\frac{(n^2-m)^2}{(n^2-m)*\epsilon+\sum_{y=1}^{m}{a_{y}}}
-\frac{n^2-n}{\epsilon} +\\
&\hskip0.5cm
-\frac{(n-m)^2}{(n-m)*\epsilon+\sum_{y=1}^{m
 }{a_{y}}}\\
&=-\frac{(n^2-n)*\sum_{y=1}^m a_{y}}{\epsilon*\left((n^2-m)*\epsilon+\sum_{y=1}^{m}{a_{y}}\right)*\left((n-m)*\epsilon+\sum_{y=1}^{m
	}{a_{y}}\right)},
\end{align*}
and the limit for it as $\epsilon$ tends to $0$ is: 
\begin{align*}
\lim_{\epsilon \rightarrow 0^+}\frac{\partial^2 \BHr{\eps{A}^*}}{\partial \epsilon^2}&=
\lim_{\epsilon \rightarrow 0^+}\frac{(n^2-m)^2}{(n^2-m)*\epsilon+
	\sum_{y=1}^{m}{a_{y}}}
-\lim_{\epsilon \rightarrow 0^+}\frac{(n^2-n)}{\epsilon} +\\
&\hskip1cm
-\lim_{\epsilon \rightarrow 0^+}\frac{(n-m)^2}{(n-m)*\epsilon+\sum_{y=1}^{m
	}{a_{y}}}\\
&=\frac{(n^2-m)^2}{\sum_{y=1}^{m}{a_{y}}}-\infty
-\frac{(n-m)^2}{\sum_{y=1}^{m}{a_{y}}}\\
&= -\infty.
\end{align*}
So, there exists a right-neighbourhood of $\epsilon=0$ such that 
none of its values is mapped in $0$ through the second derivative 
of $\BHr{\eps{A}^*}$.

The ratio between 
$\partial^2 \THr{\eps{A}^*}/\partial \epsilon^2$ and 
$\partial^2 \BHr{\eps{A}^*}/\partial \epsilon^2$ can be 
algebraically simplified because they both have $1/\epsilon$ as a 
factor. As a matter of fact,
\begin{align*}
\frac{\partial^2 \THr{\eps{A}^*}}{\partial \epsilon^2}&=
\frac{(n-1)*m}{\epsilon}-(n-1)*\sum_{y=1}^m\frac{n-1}{(n-1)*\epsilon+a_y}\\
&=\frac{(n-1)*m-\epsilon*\sum_{y=1}^m\frac{(n-1)^2}{(n-1)*\epsilon+a_y}}{\epsilon}
\end{align*}
and
\begin{align}
\frac{\frac{\partial^2 \THr{\eps{A}^*}}{\partial \epsilon^2}}
{\frac{\partial^2 \BHr{\eps{A}^*}}{\partial \epsilon^2}} &= - 
{\frac{(n-1)*m-\epsilon*\sum_{y=1}^m\frac{(n-1)^2}{(n-1)*
\epsilon+a_y}}{\epsilon}}*\nonumber\\
&\hskip1cm *{\frac{\epsilon*\left((n^2-m)*\epsilon+\sum_{y=1}^{m}{a_{y}}\right)*\left((n-m)*\epsilon+\sum_{y=1}^{m
		}{a_{y}}\right)}{(n-1)*n*\sum_{y=1}^m a_{y}}}\nonumber\\
&=-\frac{m*(n-1)*\sum_{y=1}^m{a_y}}{n*(n-1)*\sum_{y=1}^m a_{y}}+\label{eq:derTopBot2}\\
&\hskip0.5cm-\epsilon^2*\frac{(n-1)*m*(n^2-m)*(n-m)}{(n-1)*n*\sum_{y=1}^m a_{y}} +\nonumber\\
&\hskip0.5cm-\epsilon*\frac{(n-1)*m*(n^2-m)*\sum_{y=1}^{m}{a_{y}}}{(n-1)*n*\sum_{y=1}^m a_{y}} +\nonumber\\
&\hskip0.5cm-\epsilon*\frac{(n-1)*m*\left(\sum_{y=1}^{m}{a_{y}}\right)*(n-m)}{(n-1)*n*\sum_{y=1}^m a_{y}} +\nonumber\\
&\hskip0.5cm-\epsilon^2*\frac{(n-1)*m*(n^2-m)*(n-m)}{(n-1)*n*\sum_{y=1}^m a_{y}} +\nonumber\\
&\hskip0.5cm+\epsilon^3*\frac{\left(\sum_{y=1}^m\frac{(n-1)^2}{(n-1)*
		\epsilon+a_y}\right)*(n^2-m)*(n-m)}{(n-1)*n*\sum_{y=1}^m a_{y}} + \nonumber\\
&\hskip0.5cm+\epsilon^2*\frac{\left(\sum_{y=1}^m\frac{(n-1)^2}{(n-1)*
		\epsilon+a_y}\right)*\left(\sum_{y=1}^{m}{a_{y}}\right)*(n-m)}{(n-1)*n*\sum_{y=1}^m a_{y}} + \nonumber\\
&\hskip0.5cm+\epsilon^2*\frac{\left(\sum_{y=1}^m\frac{(n-1)^2}{(n-1)*
		\epsilon+a_y}\right)*(n^2-m)*\sum_{y=1}^{m}{a_{y}}}{(n-1)*n*\sum_{y=1}^m a_{y}}\nonumber\\
&\hskip0.5cm+\epsilon*\frac{\left(\sum_{y=1}^m\frac{(n-1)^2}{(n-1)*
		\epsilon+a_y}\right)*\left(\sum_{y=1}^{m}{a_{y}}\right)^2}{(n-1)*n*\sum_{y=1}^m a_{y}}\nonumber
\end{align}

The first term of Eq.~\ref{eq:derTopBot2} equals $-m/n$, while 
each of the remaining terms has instead the form 
\[
\epsilon^{c}*\frac{\left(\sum_{y=1}^m\frac{(n-1)^2}{(n-1)*
		\epsilon+a_y}\right)^d*p(n,m,a_1,\ldots,a_m)}{(n-1)*n*\sum_{y=1}^m{a_y}}
\] 
for suitable natural numbers $c\in \{1,2,3\}$ and $d\in \{0,1\}$ and 
fitting polynomial function 
$p(n,m,a_1,\ldots,a_m)$. 
Since $p(n,m,a_1,\ldots,a_m)$ is constant with respect $\epsilon$ and, 
under the assumptions we made for $\eps{A}^*$, 
$(n-1)*n*\sum_{y=1}^m{a_y}$ is a positive real value, 
it is easy to see that the limit as $\epsilon$ tends 
to $0$ for each of the terms of Eq.~\ref{eq:derTopBot2}, 
but the first one, is $0$. It follows that, 
\begin{equation}\label{eq:finalHr}
\lim_{\epsilon \rightarrow 0^+}\Hr{\eps{A}^*}=\lim_{\epsilon \rightarrow 0^+}\frac{\frac{\partial^2 \THr{\eps{A}^*}}{\partial \epsilon^2}}
{\frac{\partial^2 \BHr{\eps{A}^*}}{\partial \epsilon^2}}=-\frac{m}{n}
\end{equation}
and the following theorem holds.
\begin{thm}\label{theo:hard}
Let ${\gAM{}}$ be an $n\times n$-agreement matrix.
%
If $\Hp{\nnrv{X_{\gAM{}}}}=0$ and ${\gAM{}}$ accounts 
exactly $m$ non-null rows,
then $\IAs{}({\gAM{}})$ exists and it equals $(n-m)/n$.
\end{thm}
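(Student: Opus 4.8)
The plan is to reduce Theorem~\ref{theo:hard} to the explicit computation already carried out for the canonical matrix $A^*$. By the discussion immediately preceding the statement, the hypothesis $\Hp{\nnrv{X_{\gAM{}}}}=0$ holds exactly when ${\gAM{}}$ possesses a single non-null column. Combined with the assumption that ${\gAM{}}$ has precisely $m$ non-null rows, this means ${\gAM{}}$ is, up to a relabelling of the classes, the matrix $A^*$ of Table~\ref{table:single_column_A}, with the non-null column taken to be column $1$ and the non-null entries $a_1,\ldots,a_m$ placed in the first $m$ rows. Since the marginal and joint entropies -- and therefore $\IAs{}$ -- depend only on the multiset of cell probabilities and not on their positions in the matrix, this normalisation is without loss of generality; should one instead begin from the transposed situation $\Hp{\nnrv{Y_{\gAM{}}}}=0$, the identity $\IAs{}({\gAM{}})=\IAs{}(\trans{\gAM{}})$ furnished by Lemma~\ref{lem:transpose_nonnull} would bring it back to the column case.

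With ${\gAM{}}=A^*$, I would next note that the concentrating column forces $\Hp{X_{\eps{A}^*}}$ to realise the minimum $\min\{\Hp{X_{\eps{A}^*}},\Hp{Y_{\eps{A}^*}}\}$ for all sufficiently small $\epsilon$, so that Eq.~\ref{def:Hr} applies and $\IAs{}(A^*)=1+\lim_{\epsilon\to 0^+}\Hr{\eps{A}^*}$. Substituting the column, row and total sums of $\eps{A}^*$ from Eqs.~\ref{eq:defA*Col}--\ref{eq:defA*Row} into the simplified entropy expressions Eqs.~\ref{eq:simple1}--\ref{eq:simple3} produces closed forms for the numerator $\THr{\eps{A}^*}$ and the denominator $\BHr{\eps{A}^*}$ of $\Hr{\eps{A}^*}$. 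A direct evaluation shows that both tend to $0$ as $\epsilon\to 0^+$, so the quotient presents the indeterminate form $0/0$ and de l'H\^opital's rule is the natural instrument for extracting the limit.

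The main obstacle is precisely the justification and iteration of de l'H\^opital's rule, since the indeterminacy is nested. Differentiating once does not resolve it: the terms involving $\epsilon\ln\epsilon$ produce $\ln\epsilon$ contributions, and the quotient of first derivatives takes the form $-\infty/\infty$, still indeterminate. The rule must therefore be invoked twice, and each invocation demands checking that the denominator's derivative is non-vanishing on a punctured right-neighbourhood of $0$. I would secure this as in the computation above, by showing $\lim_{\epsilon\to 0^+}\partial\BHr{\eps{A}^*}/\partial\epsilon=+\infty$ and $\lim_{\epsilon\to 0^+}\partial^2\BHr{\eps{A}^*}/\partial\epsilon^2=-\infty$; an infinite limit keeps the relevant derivative away from $0$ near the origin, legitimising each pass. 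The reward of going to the second derivatives is that both carry a common factor $1/\epsilon$, so their ratio collapses to the expression of Eq.~\ref{eq:derTopBot2}, namely a constant first term $-m/n$ plus remainder terms each weighted by a positive power of $\epsilon$. Every remainder vanishes in the limit while the positivity of $(n-1)*n*\sum_{y=1}^m a_y$ keeps the denominators harmless, leaving $\lim_{\epsilon\to 0^+}\Hr{\eps{A}^*}=-m/n$ as in Eq.~\ref{eq:finalHr}. Plugging this into $\IAs{}(A^*)=1+\lim_{\epsilon\to 0^+}\Hr{\eps{A}^*}$ yields $\IAs{}({\gAM{}})=1-m/n=(n-m)/n$, as claimed.
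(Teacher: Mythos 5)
Your proposal is correct and follows essentially the same route as the paper: reduction to the canonical single-non-null-column matrix $A^*$, reformulation of the limit as the $0/0$ quotient $\Hr{\eps{A}^*}=\THr{\eps{A}^*}/\BHr{\eps{A}^*}$, and two applications of de l'H\^opital's rule (each justified by the infinite limit of the relevant derivative of $\BHr{\eps{A}^*}$) yielding $-m/n$ and hence $\IAs{}({\gAM{}})=(n-m)/n$. Your explicit remark that $\Hp{X_{\eps{A}^*}}$ realises the minimum for small $\epsilon$ is a small justification the paper leaves implicit, but the argument is otherwise the same.
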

\begin{proof}
The proof directly follows from both Eq.~\ref{def:Hr} and Eq.~\ref{eq:finalHr}.
\end{proof}

Since, whenever defined, $\IA{}$ is symmetric with respect to
transposition, i.e., $\IA{\gAM{}}=\IA{\trans{\gAM{}}}$, we can 
prove the following corollary.

\begin{cor}\label{cor:full}
Let ${\gAM{}}$ be an $n\times n$-agreement matrix.
The information agreement extension by continuity of $B$,  
$\IAs{}({\gAM{}})$, does exist.
Moreover, if $l$ and $m$ are numbers of 
non-null columns and non-null rows in $B$, 
respectively, then 
\begin{equation}\def\arraystretch{2}
\IAs{}({\gAM{}}) = \left\{\begin{array}{lll}
\frac{n-l}{n}&\mbox{}&\text{if $\Hp{\nnrv{Y_{\gAM{}}}}=0$}\\
\frac{n-m}{n}&&\text{if $\Hp{\nnrv{X_{\gAM{}}}}=0$}\\
1+\frac{\Hp{\nnrv{Y_{\gAM{}}}}-\Hp{\nnrv{X_{\gAM{}}}\nnrv{Y_{\gAM{}}}}}{\Hp{\nnrv{X_{\gAM{}}}}}&&
\text{if $0<\Hp{\nnrv{X_{\gAM{}}}}\leq \Hp{\nnrv{Y_{\gAM{}}}}$}\\
1+\frac{\Hp{\nnrv{X_{\gAM{}}}}-\Hp{\nnrv{X_{\gAM{}}}\nnrv{Y_{\gAM{}}}}}{\Hp{\nnrv{Y_{\gAM{}}}}}&&
\text{if $0<\Hp{\nnrv{Y_{\gAM{}}}}\leq \Hp{\nnrv{X_{\gAM{}}}}$}
\end{array}\right.
\end{equation}
\end{cor}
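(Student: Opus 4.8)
The plan is to reduce every instance of the statement to one of the two theorems already established, Theorem~\ref{theo:easy} and Theorem~\ref{theo:hard}, using the transposition symmetry of Lemma~\ref{lem:transpose_nonnull}. Since the Shannon entropy is always non-negative, both $\Hp{\nnrv{X_{\gAM{}}}}$ and $\Hp{\nnrv{Y_{\gAM{}}}}$ lie in $[0,+\infty)$, so at least one of the following four situations must occur: (i) $\Hp{\nnrv{Y_{\gAM{}}}}=0$; (ii) $\Hp{\nnrv{X_{\gAM{}}}}=0$; (iii) $0<\Hp{\nnrv{X_{\gAM{}}}}\leq \Hp{\nnrv{Y_{\gAM{}}}}$; (iv) $0<\Hp{\nnrv{Y_{\gAM{}}}}\leq \Hp{\nnrv{X_{\gAM{}}}}$. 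These four alternatives are exactly the four guards of the claimed piecewise formula, so it suffices to verify, in each alternative, that $\IAs{}(\gAM{})$ exists and takes the stated value.

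First I would dispatch the two alternatives that literally coincide with the two theorems. Alternative (iii) is precisely the hypothesis of Theorem~\ref{theo:easy}, which yields both the existence of $\IAs{}(\gAM{})$ and the third branch $1+(\Hp{\nnrv{Y_{\gAM{}}}}-\Hp{\nnrv{X_{\gAM{}}}\nnrv{Y_{\gAM{}}}})/\Hp{\nnrv{X_{\gAM{}}}}$. Alternative (ii), $\Hp{\nnrv{X_{\gAM{}}}}=0$, is precisely the hypothesis of Theorem~\ref{theo:hard}, which gives $\IAs{}(\gAM{})=(n-m)/n$ with $m$ the number of non-null rows of $\gAM{}$, matching the second branch.

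Next I would handle alternatives (i) and (iv) by transposing. Recall from the remark following Lemma~\ref{lem:transpose_nonnull} that $\IAs{}(\gAM{})=\IAs{}(\trans{\gAM{}})$, and that the same lemma supplies the identities $\Hp{\nnrv{X_{\gAM{}}}}=\Hp{\nnrv{Y_{\trans{\gAM{}}}}}$, $\Hp{\nnrv{Y_{\gAM{}}}}=\Hp{\nnrv{X_{\trans{\gAM{}}}}}$, and $\Hp{\nnrv{X_{\gAM{}}Y_{\gAM{}}}}=\Hp{\nnrv{X_{\trans{\gAM{}}}Y_{\trans{\gAM{}}}}}$. For alternative (iv) these identities turn the hypothesis $0<\Hp{\nnrv{Y_{\gAM{}}}}\leq \Hp{\nnrv{X_{\gAM{}}}}$ into $0<\Hp{\nnrv{X_{\trans{\gAM{}}}}}\leq \Hp{\nnrv{Y_{\trans{\gAM{}}}}}$, so Theorem~\ref{theo:easy} applies to $\trans{\gAM{}}$; rewriting each entropy in its formula back through the three identities, and using $\IAs{}(\gAM{})=\IAs{}(\trans{\gAM{}})$, produces exactly the fourth branch. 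For alternative (i), $\Hp{\nnrv{Y_{\gAM{}}}}=0$ forces $\Hp{\nnrv{X_{\trans{\gAM{}}}}}=0$, so Theorem~\ref{theo:hard} applies to $\trans{\gAM{}}$ and yields $(n-m')/n$, where $m'$ is the number of non-null rows of $\trans{\gAM{}}$; since the rows of $\trans{\gAM{}}$ are the columns of $\gAM{}$, we have $m'=l$, giving the first branch.

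The algebraic substitutions are routine; the point deserving genuine care is that the (non-strict) guards overlap, so I must check the branches are consistent wherever two of them apply. Branches (iii) and (iv) overlap exactly when $\Hp{\nnrv{X_{\gAM{}}}}=\Hp{\nnrv{Y_{\gAM{}}}}>0$, and there the two expressions coincide because their fractions have equal numerators and equal denominators; branches (i) and (ii) overlap exactly when $\Hp{\nnrv{X_{\gAM{}}}}=\Hp{\nnrv{Y_{\gAM{}}}}=0$, in which case $\gAM{}$ has a single non-null row and a single non-null column, so $l=m=1$ and both give $(n-1)/n$. I expect this consistency verification, rather than any single computation, to be the main obstacle. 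Assembling the four verified alternatives then shows that $\IAs{}(\gAM{})$ always exists and equals the stated piecewise value, completing the proof.
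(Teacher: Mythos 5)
Your proof is correct and takes essentially the same route as the paper, whose own proof is just the one-line remark that the claim ``directly follows from Lemma~\ref{lem:transpose_nonnull}, Theorem~\ref{theo:easy}, and Theorem~\ref{theo:hard}''; your four-way case split and transposition argument is precisely the intended expansion of that remark. The explicit consistency check on the overlapping non-strict guards is a small extra that the paper omits but that does no harm.
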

\begin{proof}
The proof of the claim directly follows from Lemma~\ref{lem:transpose_nonnull}, Theorem~\ref{theo:easy}, and Theorem~\ref{theo:hard}. 
\end{proof}


\section{Computing $\IAs{}$}

\DontPrintSemicolon
\SetKwData{nnrows}{m}
\SetKwData{nncols}{l}
\SetKwData{nrows}{n}
\SetKwData{row}{row}
\SetKwData{HXR}{HX\_R}
\SetKwData{HYR}{HY\_R}
\SetKwData{HXYR}{HXY\_R}
\SetKwData{acc}{acc}
\SetKwData{pres}{res}
\SetKwData{sumA}{S\_A}

\SetKwFunction{getH}{H}
\SetKwFunction{getsumA}{get\_S\_A}
\SetKwFunction{refine}{refine}
\SetKwFunction{getpY}{get\_pY}
\SetKwFunction{getpX}{get\_pX}
\SetKwFunction{getpXY}{get\_pXY}
\SetKwFunction{countnncols}{countNonNullCols}
\SetKwFunction{countnnrows}{countNonNullRows}
\SetKwFunction{size}{size}
\SetKwFunction{IAC}{getIAC}

\SetKwProg{Fn}{def}{\string:}{}
\SetKwInOut{Input}{Input}\SetKwInOut{Output}{Output}

Corollary~\ref{cor:full} not only guarantees the existence of 
$\IAs(A)$ for 
any agreement matrix $A$, but also provides an effective way to compute 
it. Algorithm~\ref{alg:IAC} is the algorithmic counterpart of 
Corollary~\ref{cor:full} and the correctness of the former 
follows directly from the latter. 

As far as the complexity of Algorithm~\ref{alg:IAC} may concern, 
line~\ref{line:getN} can certainly be assumed to take constant 
time with respect to the size of $A$.
It is easy to figure out that 
lines~\ref{line:pX}, \ref{line:pY}, and  \ref{line:pXY}, which 
compute $\Hp{\nnrv{X_{A}}}$, $\Hp{\nnrv{Y_{A}}}$, and, 
$\Hp{\nnrv{X_{A}}\nnrv{Y_{A}}}$, respectively, take time 
$\Theta(n^2)$, i.e, their execution times are upper-bounded and 
lower-bounded by functions  
proportional to $n^2$ in both best and worst-case scenarios (e.g., see~\cite{cormen01introduction}).
If $A$ is an $n\times n$ matrix, 
then both lines~\ref{line:nnrows} and~\ref{line:nncols} take time 
$O(n^2)$, i.e, in the worst-case scenario, 
their execution times are upper-bounded by functions 
proportional to $n^2$ (e.g., see~\cite{cormen01introduction}).
All the remaining lines take constant time with respect to the input size.
So, the overall cost of Algorithm~\ref{alg:IAC} is $\Theta(n^2)$.

\setcounter{algocf}{1}
\begin{algorithm}[H]
	\Input{A generic agreement matrix $A$}
	\Output{The value $\IAs(A)$}
	
	\BlankLine
	\Fn{\IAC{$A$}}{
		\nrows $\leftarrow$ $A.\size$\label{line:getN}\tcc*[r]{get the number of rows/cols in $A$}
		\BlankLine
		
		\HXR $\leftarrow$
		\getH(\refine(\getpX($A$)))\label{line:pX}\tcc*[r]{compute  $\Hp{\nnrv{X_{A}}}$}
		
		\If{\HXR$=0$\label{line:ifhard1}}{
			\nnrows $\leftarrow$ \countnnrows($A$)\label{line:nnrows}\tcc*[r]{count the non-null rows}
			\BlankLine
			\KwRet{$(\nrows-\nnrows)/\nrows$}\label{line:return1}\;
		}
	
	\BlankLine
		\HYR $\leftarrow$
		\getH(\refine(\getpY($A$)))\label{line:pY}\tcc*[r]{compute  $\Hp{\nnrv{Y_{A}}}$}

		\If{\HYR$=0$\label{line:ifhard2}}{
			\nncols $\leftarrow$ \countnncols($A$)\label{line:nncols}\tcc*[r]{count the non-null cols}
			\BlankLine
			\KwRet{$(\nrows-\nncols)/\nrows$}\label{line:return2}\;
		}		
		\BlankLine
		\HXYR $\leftarrow$ \getH(\refine(\getpXY($A$)))\label{line:pXY}\tcc*[r]{compute  $\Hp{\nnrv{X_{A}}\nnrv{Y_{A}}}$}
		\eIf{$\HXR<\HYR$\label{line:ifeasy}}{
			\KwRet{$1+(\HYR-\HXYR)/\HXR$}\;
		}{
			\KwRet{$1+(\HXR-\HXYR)/\HYR$}\label{line:return4}\;
		}
	}
	\caption{Computes $\IAs(A)$ for any 
		agreement matrix $A$.}\label{alg:IAC}
\end{algorithm}

\bibliographystyle{plain}
\bibliography{agreement-works}

\end{document}